\tikzset{
    >=stealth',
    pil/.style={
           <-,
           thin,
           shorten <=1pt,
           shorten >=1pt,}
}
\def\ifundefined{\@ifundefined}
\def\th{^{th}}
\def\pd{\frac{\partial}{\partial q_i}}
\newcommand{\mcf}[1]{p\left( \norm{\bs{#1}}_1 \right)}
\def\pdq{\frac{\partial}{\partial q_i}}
\newcommand{\diag}{\mathop{\mathrm{diag}}}
\newcommand{\bs}[1]{\boldsymbol{#1}}
\newcommand{\norm}[1]{\big\lVert#1\big\rVert}
\newcommand{\normb}[1]{\big\lVert \bs{#1} \big\rVert_1}
\newcommand{\slfrac}[2]{\left.#1\middle/#2\right.}
\newtheorem{theorem}{Theorem}
\newtheorem{proposition}{Proposition}
\newtheorem{definition}{Definition}
\newtheorem{remark}{Remark}
\theoremstyle{definition}
\newtheorem{assumptionletter}{Assumption}
\theoremstyle{definition}
\theoremstyle{definition}
\newtheorem{condition}{Condition}
\newcommand{\modification}[1]{{\leavevmode\color{black} #1 }}
\begin{document}

\title{ 
Incentives-Based Mechanism for Efficient Demand Response Programs 
}

\author{Carlos Barreto, Eduardo Mojica-Nava, and Nicanor Quijano 
 \thanks{
 This work has been supported in part by Proyecto CIFI 2012, Facultad de Ingeniería, Universidad de los Andes, also by project Silice III Codensa-Colciencias under Grant P12.245422.006, and also by the program J\'ovenes Investigadores 2012, by Colciencias
} 
 \thanks{  Carlos Barreto is with the 
 Computer Science Department, University of Texas at Dallas
(e-mail: carlos.barretosuarez@utdallas.edu) }
\thanks{  Nicanor Quijano is with the 
 Electrical and Electronics Department, Universidad de los Andes. Bogotá, Colombia
(e-mail: nquijano@uniandes.edu.co) }
\thanks{Eduardo Mojica-Nava is with the 
 Electrical and Electronics Department, Universidad Nacional de Colombia. Bogotá, Colombia
(e-mail: eamojican@unal.edu.co) 
}}

\date{\today}

\maketitle

\begin{abstract}

In this work we 
investigate
the inefficiency 
of the electricity system with strategic agents.
\modification{
Specifically, we prove that 
without a proper control 
the total demand of an inefficient system is at most twice the total demand 
of the optimal outcome.
We propose 
an incentives scheme that promotes optimal outcomes in the inefficient electricity market. 
The economic incentives can be seen as an indirect revelation mechanism that allocates resources using a one-dimensional message space per resource to be allocated. The mechanism does not request private information from users and is valid for any concave customer's valuation function.  
}
 We propose a distributed implementation of the mechanism using population games 
 and evaluate the performance of four popular dynamics methods in terms of the cost to implement the mechanism. 
 We find that the achievement of efficiency in strategic environments might be achieved at a cost, which is dependent on both the users' preferences and the  dynamic evolution of the system. 
 Some simulation results illustrate the ideas presented throughout the paper.

\end{abstract}

\begin{IEEEkeywords}
Smart grid, demand response, electricity market, dynamic pricing, game theory, mechanism design, indirect revelation, 
resource allocation.
\end{IEEEkeywords}


\section{Introduction}

The \emph{smart grid} (SG) concept has entailed profound changes in the conception of electricity systems. Specifically, efficiency in both electricity generation and consumption is one of the main goals of the SG \cite{santacama2010}. 
\modification{
One of the main characteristics of the electricity system is that 
the technological development might be insufficient by itself to achieve the desired goals \cite{Honebein2011}. 
Hence, 
}
efficiency is expected to be achieved by means of an active cooperation of consumers in the electricity systems.  \emph{Demand response} (DR) programs arise as a tool intended to promote cooperation of consumers with the electricity system. DR programs deal with the problem of providing economic incentives to consumers in order to modify their electricity consumption behavior. 
Incentives proposed in the literature use price mechanisms, by which either
higher or lower consumption is encouraged through changes in the 
electricity prices.
However, the design of incentives is not a trivial task, because the price scheme might impact the 
robustness of the system \cite{dahleh2010}.
Also, incentives must be designed having into account the characteristics of the consumers, who might be either \emph{price takers} or \emph{price anticipators}.
On the one hand, a price taker  makes decisions without considering 
future implications of its decisions. On the other hand, a price anticipator
(or strategic agent in the context of \emph{game theory}) makes decisions taking into account the consequences of its actions on the system. 
\modification{
Particularly, 
the behavior of price anticipators 
might 
lead to a degradation of the system's efficiency,   
which is known as the \emph{price of anarchy} \cite{papadimitriou2001algorithms}, 
or the \emph{tragedy of the commons}, 
when the efficiency loss is accompanied by an overuse of resources \cite{hardin1968tragedy}.  
In this work, we show that the electricity system experiences the  {tragedy of the commons}, and particularly, we show that the total demand in an inefficient system is at most twice the total demand in an efficient system (when we consider average cost pricing).
This result complements previous works that investigate the efficiency loss of resource allocation processes \cite{tardos, johari2006scalable, johari2004efficiency, johari2005efficiency, Johari09}.
}

\modification{
The inefficiency of resource allocation processes has inspired the search of schemes that achieve optimal outcomes. 
Efficiency 
}
in a society of strategic agents might be accomplished with \emph{mechanism design,} which addresses the problem of designing the rules of a game to reach the desired outcome \cite{jackson2000mechanism, hurwicz06}. An outstanding result in mechanism design is the Vickrey-Clarke-Groves mechanism (VCG), which is a 
\emph{direct revelation mechanism} that guarantees efficient outcomes in dominant strategies  \cite{Vic61, clarke71, groves73}. 
\modification{
The VCG mechanism requests private information from users (such as preferences or payoff functions) and assigns a central agent
the tasks of 
gather information and compute the optimal resource allocation. 
}
In particular, the VCG mechanism provides incentives to elicit private information, even if users are unwilling to report their true preferences due to strategic 
or privacy issues. 
\modification{
However, the implementation of the VCG mechanism might not be practical. For instance,  
the revelation of private information might affect correlated activities of users  \cite{Rothkopf07} and also might need 
 high amounts of both communication and
computation resources to implement the optimal solutions \cite{hurwicz06}.
}

A mechanism
that allocates resources without private information 
is the Kelly mechanism \cite{kelly1997charging, kelly1998rate}. In this mechanism users send a bid to a central planner
who
assigns resources following a proportional fairness criteria. 
However, there is an efficiency loss if users are strategic 
\cite{johari2004efficiency}, which inspired the design of efficient mechanisms in strategic environments. In \cite{yang2006vcg}, an efficient one-dimensional mechanism based on the Kelly and VCG mechanisms is proposed. 
\modification{
On the other hand, \cite{Johari09} presents
a more general mechanism for convex environments 
in which each user reports her payoff function. 
}
These mechanisms use a one-dimensional message space because the payoff functions are parametrized by a single parameter.


%
%

\modification{ In this work,
we propose a scheme of economic incentives that
can be seen as an \emph{indirect revelation mechanism} based on the Clarke 
pivot mechanism \cite{AlgorithmicG}. 
The main feature of our mechanism is that, \modification{unlike the VCG mechanism,}  it does not require private information from users. 
Furthermore, 
it 
coordinates the 
resource allocation process
to achieve optimal outcomes  
utilizing a one-dimensional message space per each resource to be allocated.
}
Specifically, the mechanism entrusts the computational tasks among users, who 
maximize their own 
\modification{surplus}
based on the total demand (which in turn is calculated and broadcasted by a central agent). Thus, users avoid revelation of private information (e.g., preferences), but are required to report the aggregate consumption of their appliances during some time periods. 
It is noteworthy that the mechanism implements efficient outcomes regardless of the form of the valuation function (as long as it is concave). 
\modification{
Nevertheless, this incentives scheme might require external subsidies to be implemented, which means that the benefit obtained in the optimal outcome is not enough to fund the incentives scheme. 
}

\modification{
The optimization problem of each user is seen as a local resource allocation problem, in which the electric energy is the resource that should be allocated in different time periods.
We extend the work presented previously in \cite{barreto2013design} by implementing the mechanism using evolutionary dynamics to distribute efficiently the power consumption of each user.
Evolutionary dynamics are built based on myopic learning, which resembles subject's behavior in repeated games  \cite{chen04learning, Healy06learning}. With a myopic dynamic it is not necessary to broadcast real time information about the aggregated demand, because customers update their strategy based on the previous actions.
}

\subsection{Contributions}

The main contributions of this paper can be summarized as follows: 
1) we formulate the demand response problem as a tragedy of the commons dilemma, highlighting that the efficiency loss 
\modification{
takes place with an overuse of resources of
at most twice the total demand in an efficient system.
This result can be utilized to prove that the peak to average ratio can be reduced to a half in an efficient system;
}
2) we propose a novel scheme of economic incentives for achieving optimal demand profiles in a population of strategic agents. The mechanism uses a one-dimensional message space per resource to be allocated, is valid for any concave  electricity valuation function, and satisfies the properties of \modification{budget deficit} and individual rationality; and
3) we present a distributed implementation of the mechanism using population games,
and we evaluate the performance of four popular evolutionary dynamics, namely logit dynamics, replicator dynamics, Brown-von Neumann-Nash dynamics, and Smith dynamics in terms of the cost to implement the mechanism. 
We highlight that the achievement of efficiency in strategic environments might be achieved at a cost, which is dependent on both the users' preferences and the  dynamic evolution of the system.

\subsection{Related Literature}

\modification{
Let us elaborate on some properties of the mechanism and related literature.
%
%
The structure of the economic incentives is similar in philosophy to the mechanism in \cite{Groves&Ledyard1977}. However, \cite{Groves&Ledyard1977} presents a direct revelation mechanism in which users must communicate their marginal payoff functions.
Hence, the mechanism might need a high-dimensional message space to describe the required function.
%
Previous works analyze the dependence of the message space with the efficiency of the mechanism. 
For instance, \cite{reichelstein1988game} analyzes the minimum space required to implement optimal allocations in exchange economies, which is larger than the number of participants and resources. 
Furthermore, \cite{healy2012designing} shows that one-dimensional message spaces are not suitable to achieve optimal allocations on contractive games (games whose best response is a contraction mapping).
In this work we manage to design a mechanism that uses a one-dimensional space (to allocate a single resource) due to the properties of the system. 
Specifically, the wealth of an agent depends on her own action and the aggregate actions of other agents. Thus, we can decentralize decisions by sending to each agent the aggregated demand, which is one-dimensional
(population games satisfy the conditions to reduce the message space \cite{sandholm_book}).
Also, we guarantee stability through the properties of potential games, and consequently, we do not impose the restrictions of \cite{healy2012designing} in our work. 
}

\modification{
Furthermore, works as in \cite{teneketzis12a, teneketzis12b} propose decentralized mechanisms to allocate efficiently bandwidth in networks. These mechanisms satisfy the budget balance property, but require a multidimensional message space, because users must report their desired allocation and the price they are willing to pay for each link of the network.
Our mechanism might not guarantee the  budget balance property, but this shortfall is compensated by the reduction of the message space. 
%
}

\modification{
Our work is closely related to other indirect revelation mechanisms that implement efficient outcomes with limited information using price incentives \cite{mohsenian, sandholm2005negative}. 
However, our work differs from \cite{mohsenian} in 
that they 
minimize the energy cost, while we maximize the aggregate surplus of the society. 
Also, in \cite{mohsenian} each user has to broadcast the scheduled daily energy consumption to all other users, while we assume that a central agent broadcast this information to all users. 
On the other hand, \cite{sandholm2005negative}
models the society's behavior using evolutionary dynamics, which converge to the optimal solution because the system is modeled as a potential game. 
A limitation of \cite{sandholm2005negative} is that the payoff functions must be additively separable in the private information of users.
In our work we 
 model the strategy of a customer as the mixed strategy of a population. Thus, an agent is not limited to a fixed set of strategies (as in \cite{sandholm2005negative}), but can choose from a large but finite set.

}



This work is inspired in previous applications of mechanism design in efficient resource allocation problems, notably \cite{johari2006scalable}. 
\modification{
}
In particular, we consider the DR problem as a resource allocation problem of multiple goods. 
We depart from other resource allocation approaches such as \cite{marden14bounds}, in which we do not specify the payoff functions that lead to a desired outcome. Instead, we design the rules of the game to achieve the desired result without modifying the valuation function of each agent.

\subsection{Organization of the Paper}

The rest of the paper is outlined as follows. 
In Section~\ref{sec:backgnd} we introduce the model of the electricity system and prove that its inefficiency is caused by selfish agents that have incentives to over-exploit resources. In Section~\ref{sec:incentives}, we design  an indirect revelation mechanism that uses an incentive scheme to  achieve the optimal outcomes in strategic frameworks.
In Section~\ref{sec:implementation}, we introduce the distributed implementation of the mechanism. 
Section~\ref{sec:sim} presents some examples to illustrate the ideas developed throughout the paper. Finally, Section~\ref{sec:concl} presents some conclusions and future directions.


\section{Problem Statement}\label{sec:backgnd}

In this section we introduce the market model of an electricity system and present two solution concepts arising in both ideal and strategic environments. Then, we prove the inefficiency of the electricity system in strategic environments and discuss its consequences in electricity markets.

\subsection{Electricity System Model}\label{sec:model}

We consider an electricity market composed by three parties, namely the generator, the customers, and the independent system operator (ISO). In this case, a single generator provides energy to several customers, while the ISO maintains a balance between supply and demand (market clearing). We define the society as a set with $N$ customers denoted by $\mathcal{P} = \{1,\ldots,N\}$.

\modification{
We take into account users that  might have time varying consumption preferences. These preferences are associated with their changing necessities of energy along 
the day. 
This feature is modeled by partitioning a day in $T$ time intervals in which users have roughly the same consumption preferences. 
In order to make the problem tractable, we assume that the system's parameters are independent in each time period and 
decompose the demand response problem into $T$ problems.
Henceforth we deal with a single subproblem and in
Section~\ref{sec:implementation} we couple multiple time periods using the definition of a population game.
}


We denote the daily consumption of the $i\th$ customer with a real number $q_i\in \mathbb{R}_{\geq 0}$. 
Without loss of generality, we restrict the electricity consumption to
zero or positive values, i.e., $q_i\geq 0$, for all agents $i\in\mathcal{P}$.
Likewise, the electricity consumption of the society is denoted by the 
vector $\bs{q} = [q_1, \ldots, q_N]^\top \in \mathbb{R}_{\geq 0}^{N}$ and the 
vector $\bs{q}_{-i} = [q_1, \ldots, q_{i-1}, q_{i+1}, \ldots,  q_N]^\top \in \mathbb{R}_{\geq 0}^{N - 1}$ represents the consumption of the society, except for the $i\th$ customer.

Ideally, each customer regulates its electricity consumption in function of its own preferences and the electricity price. 
In particular, rational agents might try to maximize their profit, which is defined as the benefit earned minus the cost associated with the energy consumed.
The benefit might be represented by means of a  \emph{valuation function} $v_i:\mathbb{R}_{\geq0}\rightarrow \mathbb{R}$, where $v_i(q_i)$ represents the economic value that the $i\th$ user assigns to $q_i$ electricity units. 

In a market economy the consumption is also limited by the cost associated to it. Hence, besides their valuation of energy, customers also  take into account the electricity price to make consumption decisions.
In particular, the cost of the energy  is denoted by $\lambda$, and consequently,  the profit of the $i\th$ consumer can be represented by 
\begin{equation}\label{eq:utility}
U_i(\bs{q}) = v_i(q_i) -  q_i \lambda.
\end{equation}

We assume that the generation is coordinated 
by the ISO in order to guarantee the balance between generation and the total demand, i.e., to guarantee that $g = \sum_{i\in\mathcal{P}} q_i$. 
\modification{
Furthermore, we assume that the ISO regulates the market by enforcing an \emph{average cost price} rule, which imposes some limits on the price charged to customers. 
This price scheme charges customers an amount equal to the cost necessary to create the product and it is used to avoid market manipulation in natural monopolies, such as the electricity system \cite{brown1983marginal}. Average cost pricing is often used because it is easier to measure compared to the marginal cost, which requires the whole cost curve. 
%
The average cost price function is defined as 
\begin{equation}\label{eq:avg_price}
p(g) = \slfrac{C(g)}{g},
\end{equation}
where $g $ represents the total demand and $C(g)$ is the production cost associated with a generation of $g$ energy units. 
}
%
%
\modification{
}
%
Using the  average cost price scheme, we can express the profit of the $i\th$ agent in Eq. (\ref{eq:utility}) as
\begin{equation}\label{eq:u_i_}
U_i(\bs{q}) = U_i(q_i, \bs{q}_{-i}) = v_i(q_i) -  q_i p\left( \norm{\bs{q}}_1 \right),
\end{equation}
where 
$||\cdot||_1$ is the 1-norm. 


\subsection{Market Equilibrium}

A market equilibrium is reached when the reiterated interaction between buyers and sellers is in balance. Here, we introduce the equilibrium conditions of both ideal economies and economies with strategic agents, which are used to show the inefficiency of the electricity system in strategic environments.

\subsubsection{Optimal Equilibrium}
In an ideal market customers behave as price takers and consequently,  and their behavior lead to an optimal outcome in the sense of Pareto (i.e., an outcome in which no individual can made better without making someone else worse off), which 
maximizes the 
aggregate benefit of producers and consumers. 
Accordingly, the demand profile $\bs{\mu} \in \mathbb{R}_{\geq0}^{N}$ that maximizes the aggregate surplus is a solution to the following optimization problem:
\begin{equation}
\begin{aligned}
& \underset{\bs{q}}{\text{maximize}}
& &  \sum\nolimits_{i\in\mathcal{P}} U_i(\bs{q}) \\
& \text{subject to}
& & q_i \geq 0, \, i =\{1,\ldots,N\}. 
\end{aligned}
\label{eq:opt_problem}
\end{equation}
\modification{
Note that the profit of the producer is zero under an average cost price scheme, therefore, the aggregate surplus only includes the surplus of consumers.
}

\modification{
The following assumptions state that the more an individual consumes, the larger is her satisfaction. 
However, each additional unit of satisfaction (or marginal utility) decreases with each additional unit of consumption. 
This is known as the diminishing marginal utility property.
%
%
}

\modification{
\begin{assumptionletter} \label{as:1} \
\textit{
\begin{enumerate}
 \item The valuation function $v_i(\cdot)$ is differentiable, concave,  non-decreasing, and satisfies $v_i(0) = 0$.
 \item The generation cost function can be expressed as $C(z) = z p(z)$, where $p(\cdot)$ is differentiable, convex, and non-decreasing.
\end{enumerate}
}     
\end{assumptionletter}
}
\modification{
With these assumptions the customer's surplus is a concave function. 
This is a necessary condition by the second theorem of welfare to guarantee that the any Pareto-efficient allocation can be supported as a competitive equilibrium \cite{mas1995microeconomic}.
The existence of a unique market equilibrium $\bs{\mu}$ inside the feasible area (i.e., $\bs{\mu}$ belongs to $\mathbb{R}_{\geq 0}^{N}$) is ensured  if the following assumption is satisfied.

\begin{assumptionletter} \label{as:2} 
         \textit{
Let us consider an arbitrary demand profile $\tilde{\bs{q}}$ such that $\tilde{q}_i = 0$ and $\tilde{q}_j \geq 0$ for any $i,j\in\mathcal{P}$ such that $i\neq j$. In this case, the $i\th$ user has incentives to increase its demand. In other words, the marginal valuation at $\tilde{q}_i = 0$ is greater than the unitary electricity price, i.e., 
 \begin{equation} 
 \pdq  U_i ( \bs{q}) \bigg|_{\bs{q}=\tilde{\bs{q}}} = 
 \pdq v_i(0) - p\left( 
 \norm{ \tilde{\bs{q}} }_1 
 \right)   \geq 0,
 \end{equation}
 for all agent $i\in \mathcal{P}$.
 }
\end{assumptionletter}
}
\begin{remark}
These assumptions are reasonable in the context of economic theory \cite{mas1995microeconomic}. Notice that Assumptions \ref{as:1} and \ref{as:2} imply that the problem in Eq. (\ref{eq:opt_problem}) has an optimal $\bs{\mu}$ that satisfies the following first order conditions (FOC):
\begin{multline}
\pd \sum\nolimits_{i\in\mathcal{P}} U_i (\bs{q}) \bigg|_{\bs{q}=\bs{\mu}} = 
\pd v_i(q_i)   
\\
-\mcf{q} -
\left( 
\norm{ \bs{q} }_1 
\right) \pd  \mcf{q} 
\bigg|_{\bs{q}=\bs{\mu}}= 0.
\label{eq:FOC1}
\end{multline}
\end{remark}


\subsubsection{Nash Equilibrium} \label{sec:game}

\modification{
We model the strategic interactions by means of a \emph{Cournot game}, in which users select the quantity they want to consume \cite{cournot1838recherches, mas1995microeconomic}.
%
In the electricity system
there is a conflict between agents, because their actions
impose externalities on the society through the price signals. }
If the society is finite, then the consumption of each agent might have a significant impact on the electricity prices and might affect the profit of other agents. 
%
%
Consequently, the electricity system might be seen as a game in which each agent is selfish and endeavors to maximize 
independently its own profit. 

The game can be defined as the 3-tuple  $G = \langle \mathcal{P}, (S_i)_{i \in \mathcal{P}}, (U_i)_{i \in \mathcal{P}} \rangle$, where $\mathcal{P}$ is the set of players (or customers), $S_i = \mathbb{R}_{\geq 0}$ is the set of available strategies \modification{(consumption of electricity)} of each player, and $U_i: S_1 \times \dots \times S_N \to \mathbb{R}$ is the surplus function of the $i\th$ player as a function of its own actions as well as the  strategies of other players.
The equilibrium concept used in game theory is   the Nash equilibrium \cite{nash1951non, fudenberg-tirole}. In particular, the Nash  equilibrium of the game $G$, denoted by $\bs{\xi} \in\mathbb{R}_{\geq 0}^{N}$, satisfies
\begin{equation}
 U_i (\xi_i,\bs{\xi}_{-i}) \geq U_i (q_i,\bs{\xi}_{-i})  , \, \text{for all } q_i\in\mathbb{R}_{\geq 0}, 
\end{equation}
for all agents $i\in\mathcal{P}$.
 The Nash equilibrium $\bs{\xi}$ is a solution to
 the following maximization problem that every agent $i\in\mathcal{P}$ attempts to solve independently
\begin{equation}\label{eq:opt_problem_games}
\begin{aligned}
& \underset{q_i}{\text{maximize}}
& & U_i (q_i,\bs{q}_{-i}) =  v_i(q_i) - q_i p\left( \norm{ \bs{q} }_1 \right)\\
& \text{subject to}
& & q_i \geq 0,  i =\{1,\ldots,N\}.
\end{aligned}
\end{equation}

Assumptions \ref{as:1} and \ref{as:2} imply that the Nash equilibrium $\bs{\xi}$ of the game $G$  satisfies the following FOC:
\begin{multline}
\pd U_i (q_i, \bs{q}_{-i}) \bigg|_{\bs{q} = \bs{\xi}} = \pd v_i(q_i) \\
 -  \mcf{q} -  q_i \pd  \mcf{q} \bigg|_{\bs{q} = \bs{\xi}} = 0,
 \label{eq:FOC2}
\end{multline}
%
%
\modification{
Furthermore, we can use the results by Rosen \cite{rosen1965} to prove that the Nash equilibrium is unique provided that the strategy space is convex.

}

\modification{
\begin{remark}
 The problems described in this section have solution as long as the strategy space is convex. 
\end{remark}
}

\subsection{Electricity System Inefficiency}

The degradation of the system's efficiency  due to selfish agents is known as the \emph{price of anarchy}  \cite{papadimitriou2001algorithms}. 
The research in this field has been focused on quantifying the efficiency loss in specific game environments \cite{tardos, johari2004efficiency, johari2005efficiency}. In this work, we are interested in the relationship between the electricity system and the \emph{tragedy of the commons} \cite{hardin1968tragedy}. 
\modification{
In the tragedy of the commons the self interest of users lead to an abuse of the resources, even if it is contrary to the interest of the whole group, because
%
agents have no incentive to individually regulate their consumption. 
The reason is that the individual benefit for abusing is greater than its cost, because the cost is shared by the community. 
Pitifully, when all individuals abuse the benefit earned by the community is reduced.
The tragedy of the commons can be defined as follows.
\begin{definition}[Tragedy of the commons]
 At the optimal outcome, every agent has incentives to consume more resources.
\end{definition}

Now, let us show that the electricity system experiences the tragedy of the commons and the implications of this fact in the design of demand response programs.
}
\modification{

\begin{theorem}\label{thm:efficiency}
 Suppose that Assumptions \ref{as:1} and \ref{as:2} are satisfied. If $\bs{\mu}$ and $\bs{\xi}$ are the solutions of the optimization problems in Eq.~(\ref{eq:opt_problem}) and Eq.~(\ref{eq:opt_problem_games}), respectively,
 then the following conditions are satisfied:
 \begin{enumerate}
 \item $\mu_i \leq \xi_i$, for all $i \in \mathcal{P}$.
  
 \item The Nash equilibrium $\bs{\xi}$ is an inefficient outcome in the sense of Pareto.
 \end{enumerate}
Hence, the game defined by $(U_1, \ldots, U_N)$ resembles the tragedy of the commons, because i) every user has incentives to increase its demand; and ii) the self interest leads to an undesired social outcome.
\end{theorem}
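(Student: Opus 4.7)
The plan is to exploit the only structural difference between the two first-order conditions \eqref{eq:FOC1} and \eqref{eq:FOC2}: the social planner internalizes the full aggregate $\|\bs{q}\|_1$ in the marginal-cost term, whereas the strategic agent internalizes only her own consumption $q_i$. Subtracting \eqref{eq:FOC1} from the pointwise gradient $\partial U_i/\partial q_i$ and evaluating at $\bs{q}=\bs{\mu}$, I would obtain
\[
\left.\frac{\partial U_i}{\partial q_i}\right|_{\bs{q}=\bs{\mu}} = \bigl(\|\bs{\mu}\|_1 - \mu_i\bigr)\, p'\bigl(\|\bs{\mu}\|_1\bigr) \;\geq\; 0,
\]
with strict inequality as soon as $N\geq 2$ and $p$ is strictly increasing. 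This alone delivers the tragedy-of-the-commons characterization stated in the last sentence of the theorem: at the social optimum every individual would privately prefer to expand her demand.

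To upgrade this into the componentwise comparison in claim~1), I would combine the previous inequality with the strict concavity of $U_i(\cdot,\bs{q}_{-i})$ implied by Assumption~\ref{as:1}. Because $U_i$ is strictly concave in $q_i$, a non-negative marginal at $\mu_i$ forces the best response to satisfy $\mathrm{BR}_i(\bs{\mu}_{-i}) \geq \mu_i$; combined with uniqueness of the Nash equilibrium (Rosen's conditions, as noted in the excerpt) and a monotone-comparative-statics argument for aggregative games with negative externalities, this yields $\xi_i \geq \mu_i$. The alternative route is by contradiction: assume $\mu_k>\xi_k$ for some $k$; concavity of $v_k$ gives $v_k'(\mu_k)\leq v_k'(\xi_k)$, i.e.\ $p(M)+Mp'(M)\leq p(X)+\xi_k p'(X)$ with $M=\|\bs{\mu}\|_1$ and $X=\|\bs{\xi}\|_1$, and the monotonicity/convexity of $p$ from Assumption~\ref{as:1} contradict this inequality once it is combined with $\xi_k<M$.

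For claim~2), the cleanest route is an explicit Pareto-improving perturbation of $\bs{\xi}$. I would take the uniform contraction $\bs{q}(\epsilon)=(1-\epsilon)\bs{\xi}$ and differentiate $U_i(\bs{q}(\epsilon))$ at $\epsilon=0$. The Nash FOC \eqref{eq:FOC2} kills the own-derivative contribution, while the cross-derivatives satisfy $\partial U_i/\partial q_j=-q_i\,p'(\|\bs{q}\|_1)$ for $j\neq i$, producing
\[
\left.\frac{d U_i(\bs{q}(\epsilon))}{d\epsilon}\right|_{\epsilon=0} = \xi_i\bigl(\|\bs{\xi}\|_1 - \xi_i\bigr)\,p'\bigl(\|\bs{\xi}\|_1\bigr) \;\geq\; 0,
\]
strictly positive for every agent whenever the population contains at least two active consumers. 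Hence $U_i((1-\epsilon)\bs{\xi})>U_i(\bs{\xi})$ simultaneously for all $i$ at small $\epsilon>0$, so $\bs{\xi}$ is Pareto dominated by a nearby profile and assertion (ii) follows.

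The hard part will be claim~1): because the two FOCs couple the agents through the scalar aggregate $\|\bs{q}\|_1$, a purely pointwise comparison is not enough, and the componentwise ordering must be extracted either from a monotone-comparative-statics argument within the aggregative-game framework or from a contradiction carefully built on the joint monotonicity of $p$ and $p'$ together with the concavity of every $v_i$.
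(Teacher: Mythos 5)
Your computation of the private marginal $\partial U_i/\partial q_i$ at $\bs{\mu}$ is correct and cleanly delivers the ``every user wants to consume more at the optimum'' half of the tragedy-of-the-commons claim; it is the mirror image of the paper's computation (the paper evaluates the \emph{social} gradient at $\bs{\xi}$, you evaluate the \emph{private} gradient at $\bs{\mu}$). Your treatment of claim~2 via the uniform contraction $(1-\epsilon)\bs{\xi}$ is also correct and is a genuinely different route from the paper's: the paper infers Pareto inefficiency from the fact that $\bs{\xi}\neq\bs{\mu}$ and $\bs{\mu}$ maximizes aggregate surplus, whereas you exhibit an explicit profile that simultaneously raises every $U_i$, which is closer to the definition of Pareto domination and arguably more self-contained (modulo the degenerate case $\dot p(\norm{\bs{\xi}}_1)=0$, where your first-order argument only gives a weak inequality).

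The genuine gap is in claim~1, which you flag as the hard part but do not close. The monotone-comparative-statics route is invoked as a black box. The contradiction route, as written, does not terminate: from $\mu_k>\xi_k$ you derive $p(M)+M\dot p(M)\le p(X)+\xi_k\dot p(X)$ with $M=\norm{\bs{\mu}}_1$ and $X=\norm{\bs{\xi}}_1$, but this compares the price function at two \emph{different} aggregates whose ordering is exactly what is unknown at that stage; when $X>M$ the terms $p(X)\ge p(M)$ and $\xi_k<M$ push in opposite directions, so monotonicity and convexity of $p$ alone yield no contradiction. The paper avoids this entirely by comparing the two first-order conditions at the \emph{same} point $\bs{\xi}$: if $\mu_j>\xi_j$, then (since the concave social objective is maximized at $\bs{\mu}$) the social marginal in the $q_j$ direction at $\bs{\xi}$ is positive, i.e. $\dot v_j(\xi_j)-p(X)-X\dot p(X)>0$; subtracting the Nash condition $\dot v_j(\xi_j)-p(X)-\xi_j\dot p(X)=0$ leaves $-\norm{\bs{\xi}_{-j}}_1\,\dot p(X)>0$, which immediately contradicts $\dot p\ge0$. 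That single-point comparison is the missing idea you would need to repair your argument.
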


}

The proof for this and all other theorems and propositions are given in the Appendix.

\modification{
\begin{remark}
Without the average cost price scheme some company might be tempted to induce inefficient outcomes in the system in order to cause a greater energy demand.
\end{remark}
}

An implication of the tragedy of the commons is that, although there are plenty of resources and consumption capacity, it is not convenient to over-exploit  resources. 
In particular, the optimal outcome is characterized by a low consumption with respect to the inefficient outcome. 
Thus, the optimal outcome  reduces not only the maximum consumption (or peak), but also the overall consumption.
In consequence, using this efficiency criteria in DR schemes might seem counter intuitive if  the objective is 
to avoid peaks and to flatten the demand,
rather than lowering the total consumption. 
Hence, the DR objectives might  not be fully captured by maximizing the aggregate surplus (see Eq.~(\ref{eq:opt_problem})).

However, we can use this optimality criteria to alleviate the burden on stressed systems, without incurring in costs associated with additional generation and transmission capacity. Particularly, by reducing peaks the supplier can eliminate the most expensive generators that are used to supply peak demand  \cite{spees2007demand, baldor}.
\modification{
The following result gives a boundary of the maximum reduction of demand that can be achieved in the optimal outcome.
}
\modification{
\begin{theorem}\label{thm:ratio}
The demand of users in the efficient and inefficient systems satisfies
\begin{equation}
 \frac{ 1}{ 2  } \leq \frac{ N+1}{ 2 N } \leq \frac{ \norm{\bs{\mu}}_1 }{ \norm{ \bs{\xi} }_1 } \leq 1 .
\end{equation}
\end{theorem}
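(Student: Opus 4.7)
The upper bound $\|\bs{\mu}\|_1/\|\bs{\xi}\|_1 \le 1$ is immediate from Theorem~\ref{thm:efficiency}(i), which states $\mu_i \le \xi_i$ for every $i \in \mathcal{P}$; summing over $\mathcal{P}$ yields $\|\bs{\mu}\|_1 \le \|\bs{\xi}\|_1$. The first inequality $1/2 \le (N+1)/(2N)$ is a trivial arithmetic identity, since $(N+1)/(2N) = 1/2 + 1/(2N)$. So the entire content is in showing
\[
\frac{N+1}{2N} \le \frac{\|\bs{\mu}\|_1}{\|\bs{\xi}\|_1}.
\]

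Let $M=\|\bs{\mu}\|_1$ and $X=\|\bs{\xi}\|_1$. The plan is to work from the two first-order conditions already derived in the excerpt. From (\ref{eq:FOC1}), $v_i'(\mu_i)=p(M)+M\,p'(M)$ for every $i$, and from (\ref{eq:FOC2}), $v_i'(\xi_i)=p(X)+\xi_i\,p'(X)$. Summing these over $i\in\mathcal{P}$ gives
\[
\sum_{i\in\mathcal{P}} v_i'(\mu_i)=N\bigl[p(M)+M\,p'(M)\bigr],\qquad
\sum_{i\in\mathcal{P}} v_i'(\xi_i)=N\,p(X)+X\,p'(X).
\]
Because Theorem~\ref{thm:efficiency}(i) gives $\mu_i\le\xi_i$ and Assumption~\ref{as:1} makes each $v_i$ concave (so $v_i'$ is non-increasing), the left-hand sum dominates the right-hand one, which yields the master inequality
\[
N\,p(M)+N M\,p'(M)\ \ge\ N\,p(X)+X\,p'(X).
\]

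The key step is now to use the monotonicity and convexity of $p$ to squeeze an inequality on $M/X$ out of this. I would rewrite the master inequality as $NM\,p'(M)-X\,p'(X)\ge N\bigl[p(X)-p(M)\bigr]$, apply the tangent inequality for the convex function $p$ at $M$, namely $p(X)-p(M)\ge p'(M)(X-M)$, and then add $NM\,p'(M)$ to both sides. This rearranges to
\[
2NM\,p'(M)\ \ge\ X\bigl[N\,p'(M)+p'(X)\bigr].
\]
Because $p$ is convex, $p'(X)\ge p'(M)$, so the bracket on the right is at least $(N+1)\,p'(M)$. Dividing by $2N\,p'(M)$ (when positive) produces the desired bound $M/X\ge(N+1)/(2N)$.

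The last thing to worry about is the degenerate case $p'(M)=0$. Here the master inequality collapses to $N\,p(M)\ge N\,p(X)+X\,p'(X)$, and combined with $p(X)\ge p(M)$ this forces $X\,p'(X)\le 0$. By Assumption~\ref{as:1}, $p'(X)\ge 0$, so either $X=0$ (in which case the ratio is vacuously $1$) or $p'$ vanishes on $[0,X]$, meaning $p$ is constant there, and a direct comparison of the two FOCs then gives $\mu_i=\xi_i$ and hence $M=X$. Either way the bound holds with equality in the extreme, and the lower-bound tightness $(N+1)/(2N)$ is attained in the limit of linear valuations and linear price function, which can be checked a posteriori by a small example. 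The main obstacle is not any single step but the bookkeeping of combining the concavity of $v_i$ on the left with the convexity and monotonicity of $p$ on the right, and ensuring the degenerate flat-price case is cleanly handled.
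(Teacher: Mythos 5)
Your argument is correct and is essentially the paper's own proof: both rest on the same three ingredients in the same order --- concavity of $v_i$ together with $\mu_i\le\xi_i$ to compare the two first-order conditions, the tangent (convexity) inequality for $p$ at $\norm{\bs{\mu}}_1$, and monotonicity of $\dot p$ --- arriving at the same final inequality $(N+1)\norm{\bs{\xi}}_1\le 2N\norm{\bs{\mu}}_1$; the only cosmetic difference is that you sum the FOCs before manipulating, whereas the paper derives $\norm{\bs{\xi}}_1+\xi_i\le 2\norm{\bs{\mu}}_1$ for each $i$ and sums at the end. Your explicit treatment of the degenerate case $\dot p(\norm{\bs{\mu}}_1)=0$ is a small point of extra care that the paper's division step silently omits.
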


}

\modification{
From this result it is clear that for large $N$ the implementation of a DR program can reduce the total demand in  $50\%$. Particularly, if the peak demand takes place at the same time period for both the optimal and suboptimal cases, then the peak reduction is at most $50\%$.
%

In order to  analyze the improvement in the \emph{peak-to-average} (PAR) with DR we denote $\norm{ \bs{q}^k }_1 $ as the total demand in the $k\th$ time interval, with $k\in\{1, \ldots, T\}$. Thus, the PAR of both the optimal and suboptimal solutions is defined as
\begin{align}
PAR_{o} = \frac{T \norm{\bs{\mu}^{k_p}}_1}{\sum_{k=1}^T \norm{\bs{\mu}^k}_1 }
\text{  and  } 
PAR_{so} = \frac{T \norm{\bs{\xi}^{k_{p}}}_1}{\sum_{k=1}^T \norm{\bs{\xi}^k}_1 },
\end{align}
respectively.  In this case, we assume that the peak consumption is made at the time period $k_p$.
The change in the PAR can be represented by
\begin{align}\label{eq:par_ratio}
 \frac{PAR_{so}}{PAR_{o}} = \frac{ \slfrac{ \norm{\bs{\xi}^{k_{p}} }_1 }{ \norm{\bs{\mu}^{k_p}}_1 }  }
 { \slfrac{ \sum_{k=1}^T \norm{\bs{\xi}^k }_1 }{ \sum_{k=1}^T \norm{\bs{\mu}^k}_1 } }
\end{align}

We can use Theorem \ref{thm:ratio} to find the maximum improvement in the PAR.
For that, let us denote $\phi^k = \sfrac{ \norm{\bs{\xi}^k }_1 }{ \norm{\bs{\mu}^k}_1 }$, which satisfies $1\leq \phi^k\leq 2$ from Theorem \ref{thm:ratio}. 
Thus,  $\slfrac{ \sum_{k=1}^T \norm{\bs{\xi}^k }_1 }{ \sum_{k=1}^T \norm{\bs{\mu}^k}_1 }$ can be rewritten as $ \slfrac{ \sum_{k=1}^T  \phi^k \norm{\bs{\mu}^k }_1 }{ \sum_{k=1}^T \norm{\bs{\mu}^k}_1 }
$. 
This expression has the following lower bound
\begin{equation}\label{eq:lower_bound_average}
 \slfrac{ \sum\nolimits_{k=1}^T  \phi^k \norm{\bs{\mu}^k }_1 }{ \sum\nolimits_{k=1}^T \norm{\bs{\mu}^k}_1 }
 \geq
 \min_k \phi^k,
\end{equation}
which can be extracted given that 
$\sum_{k=1}^T  \phi^k \norm{\bs{\mu}^k }_1  \geq  \sum_{k=1}^T  (\min_k \phi^k) \norm{\bs{\mu}^k }_1$
holds.
Now, Eq. (\ref{eq:lower_bound_average}) can be replaced in Eq. (\ref{eq:par_ratio}) to obtain
 \begin{equation}
  \frac{PAR_{so}}{PAR_{o}} 
 \leq 
 \frac
 {\phi^{k_p}}
 {\min_k \phi^k} \leq 2. 
 \end{equation}
The rightmost inequality follows because $1\leq \phi^k\leq 2$.

If the ratio
$\phi^k$
is the same in all time periods, then the PAR has no changes in the optimal solution, regardless of the peak reduction, i.e., $\sfrac{PAR_{so}}{PAR_{o}} =1$. 
On the other hand, the PAR might be reduced if the 
ratio 
$\phi^k$
is smaller for non-peak periods. To illustrate this let us assume that $\phi^{k_p} = r$ and $\phi^{k} = r - \epsilon$ for $k\neq k_p$, $1+\epsilon \leq r\leq 2$, and $\epsilon>0$. 
Thus, the PAR ratio in Eq. (\ref{eq:par_ratio}) can be rewritten as
\begin{equation}
  \frac{PAR_{so}}{PAR_{o}} = \frac{ r \sum\nolimits_{k=1}^T \norm{\bs{\mu}^k}_1 }{ r \sum\nolimits_{k=1}^T \norm{\bs{\mu}^k}_1 - \epsilon \sum\nolimits_{k\neq k_p} \norm{\bs{\mu}^k}_1 }
  > 1.
 \end{equation}
 
 The PAR can be reduced if the demand reduction of DR is limited by consumption constraints of customers. For instance, restrictions of the form $m_i \leq q_i \leq M_i$ with $\mu_i < m_i$ or $M_i< \xi_i$ reduce the ratio between  $ \norm{ \bs{\mu} }_1 $ and $ \norm{\bs{\xi}}_1 $, 
and  might improve the PAR (see Section~\ref{sec:sim}).

}

\section{Incentives Based Mechanism}\label{sec:incentives}

Given the unsatisfactory results of the Nash equilibrium in the previous section we are motivated to design a mechanism to improve the efficiency of the equilibrium in a strategic environment.
In this case, we use {mechanism design} to find an incentives scheme that guarantees the efficiency of the Nash equilibrium \cite{mas1995microeconomic, hurwicz06}. 
First, we introduce mechanism design and present the general structure of the incentives. Then, we analyze the mechanism  as well as its properties.
\modification{
The results in this section are obtained assuming an affine unitary price function defined as $p(z) = \beta z + b$, since the generation cost can be approximated with a quadratic function \cite{wood2012power}.
}

\subsection{Incentives-Based Mechanism}

We assume that individuals in strategic environments 
are selfish and take actions that maximize their profits based on both local (or private) and global information. 
From the perspective of mechanism design, 
the rules that govern the payoff structure (and consequently the outcome of the game) can be designed by an agent, hereinafter called \emph{the principal}.  The principal might be interested in the achievement of some social goals, such as efficiency according to a particular criteria. 
Hence, the principal's problem   consists in designing the game rules that promote the desired outcome in function of the system characteristics (or economic environment).
The \emph{economic environment} $\bs{\theta} = [\theta_1, \ldots, \theta_N]$  is defined in terms of
the private information held by each agent, 
denoted by $\theta_i \in \Theta_i$, where $\theta_i$ is referred as the type of the $i\th$ agent and $\Theta_i$ is the set of all possible types. The private information $\theta_i$ is relevant to calculate the profit that each agent receives at a given outcome.

Summarizing, mechanism design consists in designing a solution system to a decentralized optimization problem with private information
\cite{hurwicz06}. 
A mechanism $\Gamma=\{ \Sigma_1, \ldots, \Sigma_N, g(\cdot) \}$ defines a set of strategies $\Sigma_i$ for each player and an outcome rule $ g:\Sigma_1 \times \ldots \times \Sigma_N \to \mathcal{O}$ that maps from the set of possible strategies  to the set  of possible outcomes $\mathcal{O}$ \cite{mas1995microeconomic}.
%
In particular, a \emph{direct revelation} mechanism (such as the  
VCG
mechanism \cite{Vic61, clarke71, groves73}) defines the set of strategies as the private information of the agents, i.e., $\Sigma_i = \Theta_i$, for all $i\in\mathcal{P}$. 
In such cases, the principal is in charge of deciding the outcome of the game based on the information sent by the agents. 
For example, in voting systems, the strategies are preferences reports made by agents, while the outcome (the selection of one candidate) might be decided using the Borda rule or plurality with elimination, among others.
Likewise, in auctions the strategies are bids and  the outcome (item allocation and its price) might be decided using a mechanisms such as the second price auction \cite{AlgorithmicG}.

We propose an indirect revelation mechanism, which does not require revelation of private information. The mechanism uses a one-dimensional message space and can be implemented in a decentralized way (see Section~\ref{sec:implementation}).
In our setting, the type of an agent is composed by her consumption preferences.
The strategy of each agent is her consumption $q_i$, and consequently, the set of all possible strategies for the $i\th$ agent is defined as $\Sigma_i\equiv S_i = \mathbb{R}_{\geq 0}$. In this case, the set of all possible outcomes $\mathcal{O}$ is composed by all the possible electricity prices, i.e., $\mathcal{O} \equiv  \mathbb{R}_{\geq 0}$. 
Note that 
the outcome rule $g(\cdot)$ of the mechanism is the price scheme used in the electricity system. 
In this case, the mechanism objective is to achieve an optimal demand profile in a strategic setting, i.e., the Nash equilibrium should be equal to the  optimal outcome. To this end, we modify the price scheme adding an incentive function  $I_i(\cdot)$ designed to align the users' profit function with the population's objective function.
The incentive function models the externality imposed by an agent on the rest of the population. The externality is the impact in prices caused by the participation of a single individual. Thus, the incentives have the form
\begin{equation}\label{eq:I_i}
I_i(\bs{\bs{q}}) = \norm{\bs{q}_{-i}}_1  \left( h_i(\bs{\bs{q}}_{-i})  - p\left( \normb{q} \right) \right),
\end{equation}
where $h_i(\bs{\bs{q}}_{-i})$ is a term that estimates the electricity price when the $i\th$ user does not take part in the electricity system. The form of this incentive is related to the price used in the VCG
mechanism \cite{AlgorithmicG}
and some  payoff functions used in potential games \cite{monderer1996potential}.  
Here, we model $h_i(\bs{\bs{q}}_{-i})$ as
\begin{equation}\label{eq:h}
h_i(\bs{\bs{q}}_{-i})=
p\left( \sum\nolimits_{j\neq i} q_j + f(\bs{\bs{q}}_{-i}) \right),
\end{equation}
\modification{
where $f(\bs{\bs{q}}_{-i})$ is a function that represents the alternative behavior 
of the $i\th$ agent. In this case, we consider linear functions of the form 
\begin{equation}\label{eq:f}
f(\bs{\bs{q}}_{-i}) = \sum\nolimits_{j\neq i} \omega_j q_j ,
\end{equation}
where $\omega_i\in \mathbb{R}$, for all $i\in \mathcal{P}$.
}

Along these lines, with the introduction of this incentives mechanism we obtain a new game 
defined as the 3-tuple  $G_I = \langle \mathcal{P}, (\Sigma_i)_{i \in \mathcal{P}}, (W_i)_{i \in \mathcal{P}} \rangle$, where $\mathcal{P}$ is the set of players, $\Sigma_i$ is the set of available strategies of each player, and $W_i: \Sigma_1 \times \dots \times \Sigma_N \to \mathbb{R}$ is the surplus function of the $i\th$ player, which is defined as
\begin{multline}
W_i(q_i,\bs{q}_{-i}) = U_i( q_i,\bs{q}_{-i} ) + I_i( \bs{q} )
\\ = v_i(q_i) -  \normb{q} p\left( \normb{q} \right) + || \bs{q}_{-i} ||_1 h_i(\bs{q}_{-i}).
\label{eq:W_i}
\end{multline}
It can be proved that the Nash equilibrium of the game $G_I$ is equal to the optimal equilibrium \modification{$\bs{\mu}$} of the original game $G$ defined in Section~\ref{sec:game} \cite{barreto2013design}.

\subsection{Mechanism Properties}
\modification{
In mechanisms design it is ideal to satisfy the budget balance property, which
states that
the net payments are equal to zero, that is, 
the sum of charges is equal to the total cost \cite{AlgorithmicG}. 
Let us define the budget balance as follows.
\begin{definition}
 A mechanism that implements payments $t_i$ is budget balanced 
 if
$
 \sum\nolimits_{i\in\mathcal{P}} t_i(\bs{q}) = C(\norm{\bs{q}}_1).
$
\end{definition}

Note that the original game with the average cost price satisfies the budget balance property, because the total payments are equal to the generation cost (see Eq. (\ref{eq:avg_price})).
However, the introduction of incentives modifies the amount charged to each costumer, and hence, the budget balance property changes.
Here, we want to determine 
if it is possible to find some function $f(\cdot)$ such that the 
rewards introduced by the incentive function $I_i(\cdot)$ can be found by the benefits obtained in the optimal outcome. 
%
%
The budget balance condition with incentives implies that 
$
 \sum\nolimits_{i\in\mathcal{P}} q_i \, p(\norm{\bs{q}}_1) - I_i(\bs{q}) = C(\norm{\bs{q}}_1).
$
Since $p(\cdot)$ is the average cost price scheme, the budget balance property is satisfied if
$
 \sum\nolimits_{i\in\mathcal{P}} I_i(\bs{q}) = 0.
$

}

In the next theorem we prove that the incentives scheme does not satisfy the budget balance property. That is, the amount of rewards (price discounts) and penalties (price increment) are not balanced, and consequently,  the mechanism requires either inflow or outflow of resources.

%
%
\begin{theorem}\label{thm:budget}
 Suppose that Assumptions \ref{as:1} and \ref{as:2} are satisfied. Also consider that
 $p(z)=\beta z + b$, where
$z\in \mathbb{R}$, $\beta>0$, and $b\geq 0$, and a population of two or more agents. Then, 
there does not exist a function $f(\cdot)$ of the form in Eq.~(\ref{eq:f}), such that the incentives mechanism described by Eqs. (\ref{eq:I_i}) and (\ref{eq:h}) satisfies the
budget balance property.
\end{theorem}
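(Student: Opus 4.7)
The plan is to substitute the affine price $p(z) = \beta z + b$ into the incentive $I_i$ and reduce $\sum_{i\in\mathcal{P}} I_i(\bs{q})$ to a quadratic polynomial in $(q_1,\dots,q_N)$. The statement will then follow by showing that, regardless of the choice of weights $(\omega_j)$, this polynomial cannot vanish identically on $\mathbb{R}_{\geq 0}^N$.

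First I would simplify $h_i(\bs{q}_{-i}) - p(\|\bs{q}\|_1)$. Using linearity of $p$, the constant $b$ cancels and
\[
h_i(\bs{q}_{-i}) - p(\|\bs{q}\|_1) = \beta\Big(\sum\nolimits_{j\neq i} \omega_j q_j - q_i\Big),
\]
so $I_i(\bs{q}) = \beta\,(\sum_{k\neq i} q_k)\,(\sum_{j\neq i} \omega_j q_j - q_i)$. Introducing the abbreviations $S = \sum_k q_k$ and $T = \sum_k \omega_k q_k$ I would write $\sum_{k\neq i} q_k = S - q_i$ and $\sum_{j\neq i}\omega_j q_j - q_i = T - (1+\omega_i) q_i$, then expand the sum:
\[
\sum_{i\in\mathcal{P}} I_i(\bs{q}) \;=\; \beta\sum_{i\in\mathcal{P}} (S - q_i)\bigl(T - (1+\omega_i) q_i\bigr)
\;=\; \beta\Big[(N-2)\,S\,T - S^2 + \sum_{i\in\mathcal{P}} (1+\omega_i)\, q_i^2\Big].
\]

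Next I would read off the coefficients of the resulting quadratic form in $\bs{q}$. The coefficient of $q_i^2$ is $\beta(N-1)\omega_i$, and the coefficient of $q_i q_j$ for $i\neq j$ is $\beta\bigl[(N-2)(\omega_i+\omega_j) - 2\bigr]$. Budget balance demands $\sum_i I_i(\bs{q}) \equiv 0$ on $\mathbb{R}_{\geq 0}^N$; since a polynomial that vanishes on a set of positive measure vanishes identically, all these coefficients must be zero. Because $\beta>0$ and $N\geq 2$, the diagonal conditions force $\omega_i = 0$ for all $i$, but then the cross-term coefficient is $-2\beta \neq 0$, a contradiction. (For $N=2$ the same contradiction can also be obtained immediately by setting $\bs{q} = (q_1,q_2)$ and inspecting the two-variable quadratic; for $N\geq 3$ one may, alternatively, test $\bs{q}$ with exactly two nonzero entries to isolate the offending cross term.)

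The routine work lies in the algebraic expansion; the only conceptual point is to make sure the argument does not miss any admissible choice of $f$, which is ensured by keeping the weights $(\omega_j)$ generic throughout. I expect the main subtlety to be clean bookkeeping of the diagonal versus off-diagonal contributions when passing from $(N-2)ST - S^2$ back to the monomials $q_i^2$ and $q_iq_j$; once that is handled, the contradiction between the diagonal and off-diagonal vanishing conditions is immediate.
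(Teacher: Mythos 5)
Your proposal is correct, and it follows the same overall strategy as the paper: substitute the affine price, observe that the constant $b$ cancels so that $\sum_i I_i(\bs{q})$ is a homogeneous quadratic in $\bs{q}$, and show that no choice of weights $(\omega_j)$ makes it vanish identically. Where you differ is in the finishing step. The paper writes the quadratic in matrix form as $\beta\,\bs{q}^\top \Phi (F\bs{q}-\bs{q})$ with $\Phi=\bs{e}\bs{e}^\top - I$ and $F=\bs{e}\bs{\omega}^\top-\diag(\omega_1,\ldots,\omega_N)$, and then argues that $\bs{q}^\top\Phi F\bs{q}=\bs{q}^\top\Phi\bs{q}$ for all $\bs{q}$ would force $F=I$, which fails since $F$ has zero diagonal. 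That step is actually too strong as stated: equality of two quadratic forms on a set with nonempty interior only forces equality of the \emph{symmetric parts} of $\Phi F$ and $\Phi$, not $F=I$ itself, so the paper's conclusion needs an extra line of justification. Your coefficient-by-coefficient extraction --- diagonal coefficient $\beta(N-1)\omega_i$ forcing $\omega_i=0$, and cross coefficient $\beta[(N-2)(\omega_i+\omega_j)-2]$ then equal to $-2\beta\neq 0$ --- closes exactly that gap and is airtight for all $N\geq 2$; your remark that a polynomial vanishing on $\mathbb{R}_{\geq 0}^N$ (a set of positive measure) vanishes identically supplies the justification the paper leaves implicit. In short, same decomposition, but your version is the more careful one.
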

%
%
%
%
This result is an analogous to the Myerson-Satterthwaite impossibility theorem \cite{Myerson83}, which states
the impossibility of designing a mechanisms with ex-post efficiency and
without external subsidies in games between two parties. However, Theorem \ref{thm:budget} considers a nonlinear price scheme and efficiency is defined as the maximization of the aggregate surplus, rather than the maximization of the aggregate valuation.

Now, since it is not possible to find a budget balanced mechanism, 
we investigate the design of  a mechanism that satisfies the following fairness conditions.
\begin{condition}[Fairness conditions] \label{cond:incentives}
\textit{
\begin{enumerate}
 \item Incentives for the $i\th$ and $j\th$ agents are equivalent if 
their consumption is the same, i.e., if 
$q_i=q_j$, then $I_i(\bs{\bs{q}}) = I_j(\bs{\bs{q}})$. 
\item If $q_i=q_j$ for all $i,j\in\mathcal{P}$, 
then $I_i(\bs{\bs{q}}) = I_j(\bs{\bs{q}})=0$.
\item A higher power consumption deserves a lower incentive,
i.e.,  if $q_j>q_i$, then $I_j(\bs{\bs{q}}) < I_i(\bs{\bs{q}})$.
\end{enumerate}
}
\end{condition}
The following result shows the existence of a mechanism that satisfies the fairness conditions stated above.


\begin{proposition}\label{prop:f}
 Assume a population with $N\geq 2$ agents, incentives defined by Eq. (\ref{eq:I_i}) and (\ref{eq:h}), and an affine price function $p(z)=\beta z + b$ for some $\beta>0$, and $b\geq 0$. 
If the function $f(\cdot)$  has the form
\begin{equation}\label{eq:f_ideal}
f(\bs{\bs{q}}_{-i}) = \frac{1}{N-1} \sum\nolimits_{h\neq i} q_h, 
\end{equation}
for all $i\in \mathcal{P}$, then the incentives mechanism satisfies the fairness properties in Condition \ref{cond:incentives}.
\end{proposition}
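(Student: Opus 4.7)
The plan is to substitute the proposed $f(\bs{q}_{-i})$ into the incentive formula, use the affine form of $p$ to obtain an explicit closed-form expression for $I_i(\bs{q})$, and then verify each of the three fairness items directly from that expression.

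First, I would substitute $f(\bs{q}_{-i}) = \tfrac{1}{N-1} \|\bs{q}_{-i}\|_1$ into Eq.~(\ref{eq:h}). Since $\sum_{j\neq i} q_j = \|\bs{q}_{-i}\|_1$, this gives $h_i(\bs{q}_{-i}) = p\bigl(\tfrac{N}{N-1}\|\bs{q}_{-i}\|_1\bigr)$. Using $p(z)=\beta z + b$ and the identity $\|\bs{q}\|_1 = q_i + \|\bs{q}_{-i}\|_1$, the additive constant $b$ cancels in the bracket of Eq.~(\ref{eq:I_i}), yielding
\begin{equation}
h_i(\bs{q}_{-i}) - p(\|\bs{q}\|_1) \;=\; \beta\!\left[\tfrac{\|\bs{q}_{-i}\|_1}{N-1} - q_i\right].
\end{equation}
Writing $S = \|\bs{q}\|_1$ and $\|\bs{q}_{-i}\|_1 = S - q_i$, a short calculation then produces the compact form
\begin{equation}
I_i(\bs{q}) \;=\; \frac{\beta}{N-1}\,(S - q_i)(S - N q_i).
\end{equation}

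From this closed form, items (i) and (ii) of Condition~\ref{cond:incentives} are essentially free: $I_i$ depends on $\bs{q}$ only through $S$ and $q_i$, so equal consumptions give equal incentives, establishing (i). For (ii), plugging $q_i = S/N$ makes the factor $(S - Nq_i)$ vanish, so $I_i = 0$, which is (ii). For (iii), I would subtract $I_i - I_j$ and factor the difference; expanding $(S-q)(S-Nq) = S^2 - (N+1)Sq + Nq^2$ gives
\begin{equation}
I_i - I_j \;=\; -\,\frac{\beta}{N-1}\,(q_i - q_j)\bigl[(N+1)S - N(q_i + q_j)\bigr].
\end{equation}
The bracket equals $(q_i + q_j) + (N+1)\sum_{k\neq i,j} q_k$, which is strictly positive whenever $q_j > q_i \geq 0$ (since then $q_j > 0$). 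Hence $q_j > q_i$ forces $I_i - I_j > 0$, giving (iii).

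The only mildly delicate point is the strictness in (iii): one must check that the bracket does not vanish, which is why I rewrite it as a sum of nonnegative terms and use $q_j > 0$. Everything else reduces to the algebraic cancellation that is enabled precisely by choosing $f(\bs{q}_{-i})$ to equal the \emph{average} of the other users' consumptions, since that choice makes $h_i - p$ proportional to the deviation of $q_i$ from the mean of $\bs{q}_{-i}$.
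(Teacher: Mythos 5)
Your proof is correct, and in outline it matches the paper's: both exploit the affine price to reduce $I_i$ to an explicit quadratic expression and then check the three fairness items by direct algebra; items (i) and (ii) are handled identically in substance. Where you genuinely diverge is item (iii). The paper keeps $I_i = \beta\bigl(\sum_{h\neq i}\hat q_h\bigr)\bigl(\tfrac{1}{N-1}\sum_{h\neq i}\hat q_h - \hat q_i\bigr)$ and concludes $I_i < I_j$ by observing that the first factor and the second factor are each smaller for the larger consumer; multiplying two such inequalities is not a valid step in general because the second factor can be negative (and indeed is negative for above-average consumers), so the paper's argument for (iii) as written leaves a gap. Your route — writing $I_i = \tfrac{\beta}{N-1}(S-q_i)(S-Nq_i)$, forming $I_i - I_j$, and factoring out $(q_i-q_j)$ against the manifestly positive bracket $(q_i+q_j) + (N+1)\sum_{k\neq i,j}q_k$ — sidesteps this entirely and delivers the strict inequality with the sign analysis made explicit. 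So your proposal is not just an alternative presentation; for item (iii) it is the more careful argument, at the modest cost of one extra expansion of the quadratic.
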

%
%
%
%
Henceforth, we are going to use the following incentives that satisfy Condition \ref{cond:incentives}: 
 \begin{equation}\label{eq:I_i_final}
  I_i(\bs{q}) = \norm{ \bs{q}_{-i} }_1 \left( 
   p\left( \frac{N}{N-1} \norm{  \bs{q}_{-i} }_1 \right) - 
   p\left( \normb{q} \right) \right).
 \end{equation}
Therefore, the surplus function in Eq. (\ref{eq:W_i}) can be rewritten as
 \begin{equation}\label{eq:u_incentives}
  W_i(\bs{q}) = v_i(q_i) -  \normb{ q} p\left( \normb{q} \right) + \norm{ \bs{q}_{-i} }_1 p\left(  \frac{N}{N-1} \norm{ \bs{q}_{-i} }_1 \right).
 \end{equation}

 \modification{
\begin{remark}
 Each user only needs the aggregate demand $||\bs{q}_{-i}||_1$ to calculate the consumption $q_i$ that maximizes $W_i(\bs{q})$. 
   Thus, the mechanism can be implemented using a one-dimensional message space that communicates the aggregate demand to each user. When considering $T$ time periods, we can still use a one-dimensional message space if the consumption $q_i^k$ is calculated sequentially. Otherwise, the implementation might require a $T-$dimensional message space to calculate simultaneously the consumption along a day. 
\end{remark}

}

With an affine price function the incentives can be rewritten as
\begin{equation}\label{eq:I_i_ideal}
I_i(\bs{\bs{q}}) = \beta \Big( \sum\nolimits_{j\neq i} q_j \Big) 
\Big( \frac{1}{N-1} \sum\nolimits_{j\neq i} q_j - q_i \Big)
\end{equation}
for all $q_i\geq0$, $i,j \in \mathcal{P}$. Thus, the population
incentives can be expressed as
\begin{equation}\label{eq:I_i_ideal_matrix}
\sum\nolimits_{i\in\mathcal{P}} I_i(\bs{\bs{q}}) = \beta \bs{q}^\top A \bs{q},
\end{equation}
where $A=\Big( \frac{-1}{N-1} \bs{e} \bs{e}^\top + \frac{N}{N-1} I\Big)$ and $\bs{e}$ is a vector in $\mathbb{R}^N$ with all its components equal to 1.
 Now, with this expression we can  analyze some properties of incentives given by Eq.~(\ref{eq:I_i_ideal}). 
 In the next proposition we show that the system requires external subsidies to maintain the incentives scheme. In other words, 
the mechanism has budget deficit \cite{mas1995microeconomic}.
\begin{proposition}\label{prop:subsidies}
 Suppose that Assumptions \ref{as:1} and \ref{as:2} are satisfied.
Given an incentives mechanism of the form in Eq. (\ref{eq:I_i_ideal}), then the incentives required by the population are positive, i.e., 
$\sum_{i\in\mathcal{P}} I_i(\bs{\bs{q}}) \geq 0,$
for all $\bs{q}\in\mathbb{R}_{\geq 0}^N$,
\end{proposition}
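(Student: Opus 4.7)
The plan is to work directly from the matrix formulation in Eq.~(\ref{eq:I_i_ideal_matrix}), namely
\[
\sum\nolimits_{i\in\mathcal{P}} I_i(\bs{q}) = \beta\, \bs{q}^\top A \bs{q}, \qquad A=\frac{-1}{N-1}\bs{e}\bs{e}^\top + \frac{N}{N-1}I,
\]
and show that $A$ is positive semi-definite. Since $\beta>0$ by assumption on $p(\cdot)$, nonnegativity of the aggregate incentives then follows for every $\bs{q}\in\mathbb{R}^N$ (and in particular over the feasible cone $\mathbb{R}_{\geq 0}^N$), without even using nonnegativity of the components of $\bs{q}$.

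First, I would diagonalize $A$ via its obvious eigenstructure. The rank-one matrix $\bs{e}\bs{e}^\top$ has eigenvalue $N$ along the eigenvector $\bs{e}$ and eigenvalue $0$ on its orthogonal complement $\bs{e}^{\perp}$. Consequently $A$ inherits the same eigenvectors: on $\bs{e}$ the eigenvalue is $\tfrac{-N}{N-1}+\tfrac{N}{N-1}=0$, while on any vector in $\bs{e}^{\perp}$ the eigenvalue is $\tfrac{N}{N-1}>0$. Hence $A\succeq 0$, establishing $\bs{q}^\top A \bs{q}\ge 0$.

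Alternatively (and perhaps more transparently for the reader), I would expand the quadratic form directly:
\[
\bs{q}^\top A \bs{q} = \frac{1}{N-1}\Bigl( N\sum\nolimits_{i\in\mathcal{P}} q_i^2 - \Bigl(\sum\nolimits_{i\in\mathcal{P}} q_i\Bigr)^{\!2}\Bigr),
\]
and invoke the Cauchy--Schwarz inequality $\bigl(\sum_i q_i\bigr)^2 \le N \sum_i q_i^2$ (equivalently, nonnegativity of the sample variance of the components of $\bs{q}$). Both arguments yield the same bound, with equality attained exactly when $q_1=\cdots=q_N$, which recovers and strengthens item (ii) of Condition \ref{cond:incentives}: if all users consume equally, the net incentive bill vanishes.

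There is no real obstacle here; the proposition reduces to recognizing that the matrix $A$ encodes (up to a positive scaling) the empirical variance operator on $\mathbb{R}^N$, which is manifestly positive semi-definite. The only subtlety worth flagging in the write-up is that the sign of $\beta$ in $p(z)=\beta z+b$ is crucial: were $\beta<0$ the conclusion would be reversed, and were $\beta=0$ the incentives would collapse to zero identically, which is consistent with the fact that a flat price gives no externality to internalize.
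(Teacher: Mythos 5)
Your proof is correct, and it shares the paper's starting point---both arguments take the matrix form $\sum_{i\in\mathcal{P}} I_i(\bs{q}) = \beta\,\bs{q}^\top A\bs{q}$ of Eq.~(\ref{eq:I_i_ideal_matrix}) and reduce the claim to $\bs{q}^\top A\bs{q}\ge 0$---but you certify that last step by a different route. The paper sums the elementary inequality $(q_i-q_j)^2\ge 0$ over all ordered pairs $i\ne j$ to get $(N-1)\sum_i q_i^2 \ge \sum_i\sum_{j\ne i} q_i q_j$, which after expanding the quadratic form is exactly the claim. You instead diagonalize $A$ (eigenvalue $0$ along $\bs{e}$, eigenvalue $N/(N-1)$ on $\bs{e}^{\perp}$), or equivalently invoke Cauchy--Schwarz in the form $\bigl(\sum_i q_i\bigr)^2\le N\sum_i q_i^2$. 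All three are proofs that the empirical variance of $(q_1,\dots,q_N)$ is nonnegative, so the mathematical content is the same; what your version buys is that the equality case $q_1=\cdots=q_N$ falls out immediately (recovering item (ii) of Condition~\ref{cond:incentives}), that the bound visibly holds on all of $\mathbb{R}^N$ rather than just the nonnegative orthant, and that the role of $\beta>0$ is isolated explicitly. What the paper's pairwise-squares route buys is that it needs no spectral theory at all. (Incidentally, your bookkeeping is cleaner: the constant in the paper's Eq.~(\ref{eq:ineq}) should be $\tfrac{1}{N-1}$ rather than $\tfrac{2}{N-1}$ for its final step to match the expansion of $\bs{q}^\top A\bs{q}$; your Cauchy--Schwarz formulation sidesteps that slip entirely.)
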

%
%
%
%

\begin{remark}
Let us consider an homogeneous population, composed by agents with equal preferences. In such population, the energy consumed at the equilibrium is the same for every agent. Thus, according to Condition \ref{cond:incentives}, 
a homogeneous population requires null incentives at the equilibrium. In particular, incentives would be required only to shift the system from an inefficient outcome toward the optimal equilibrium.
\end{remark}

\modification{
It is possible to design incentive functions that lead to systems with \emph{budget surplus} (revenues are higher than expenses). For example, \cite{barreto_acc15} shows an alternative incentive function that entails higher electricity prices  and specifically, does not satisfies $\textbf{ii})$ in Condition \ref{cond:incentives}, because the incentives are negative even when all users have equal demand. 

Now, using Proposition \ref{prop:subsidies} we can prove that the aggregate surplus reached in the Nash equilibrium with incentives is equal to the aggregate surplus of the optimal solution $\bs{\mu}$. 
}
\begin{proposition}\label{prop:improvement}
 Consider a population of agents with surplus function of the form in Eq.~(\ref{eq:u_i_}) and incentives described by Eq.~(\ref{eq:I_i}). Also, consider that Assumptions \ref{as:1} and \ref{as:2} are satisfied. Then, the aggregate surplus with the mechanism is equal to the maximum aggregate surplus of the initial game $G$, i.e., 
   $\sum\nolimits_{i\in\mathcal{P}} W_i (\bs{\mu})  - I_i(\bs{\mu})= \sum\nolimits_{i\in\mathcal{P}} U_i( \bs{\mu} ) 
  > \sum_{i\in\mathcal{P}} U_i( \bs{\xi} ).$
 \end{proposition}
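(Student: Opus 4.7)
The plan is to decompose the claim into two independent pieces: a trivial algebraic identity, and a strict inequality that is inherited from the inefficiency result already established in Theorem \ref{thm:efficiency}. No heavy calculation should be required; the whole argument reduces to carefully invoking definitions and earlier results.

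First, I would establish the equality
$\sum_{i\in\mathcal{P}} W_i(\bs{\mu}) - \sum_{i\in\mathcal{P}} I_i(\bs{\mu}) = \sum_{i\in\mathcal{P}} U_i(\bs{\mu})$
by simply substituting the decomposition $W_i(\bs{q}) = U_i(\bs{q}) + I_i(\bs{q})$ taken from Eq.~(\ref{eq:W_i}) and cancelling the incentive terms term by term. This step uses nothing beyond the definition of $W_i$; Proposition \ref{prop:subsidies} serves only to interpret this subtraction as netting out a nonnegative outlay of external subsidies paid by the principal.

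Next, for the strict inequality $\sum U_i(\bs{\mu}) > \sum U_i(\bs{\xi})$, I would invoke two facts in sequence. Since $\bs{\mu}$ is by construction the maximizer of $\sum U_i$ over the feasible set of Eq.~(\ref{eq:opt_problem}) and $\bs{\xi}$ is feasible, I immediately get $\sum U_i(\bs{\mu}) \geq \sum U_i(\bs{\xi})$. To sharpen this to a strict inequality I would appeal to Theorem \ref{thm:efficiency}(ii): $\bs{\xi}$ is not Pareto efficient, so there exists a feasible $\bs{q}^{\ast}$ that weakly dominates $\bs{\xi}$ coordinatewise with at least one strict improvement, and therefore $\sum U_i(\bs{q}^{\ast}) > \sum U_i(\bs{\xi})$. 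Chaining these two facts yields $\sum U_i(\bs{\mu}) \geq \sum U_i(\bs{q}^{\ast}) > \sum U_i(\bs{\xi})$, which is exactly what is needed.

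The main obstacle, to the extent that there is one, is justifying the strictness. Theorem \ref{thm:efficiency}(i) alone only gives $\bs{\mu}\neq\bs{\xi}$, and this is not enough to separate the two aggregate surplus values without an additional concavity argument. Routing the argument through the Pareto-inefficiency clause (ii) sidesteps the obstruction cleanly, since a Pareto improvement always implies a strict aggregate improvement by monotonicity of addition. As a backup, one could instead argue that under Assumption \ref{as:1} and the affine price $p(z)=\beta z + b$ with $\beta>0$, the aggregate $\sum U_i$ is strictly concave along the total-demand direction, so that the first-order condition in Eq.~(\ref{eq:FOC1}) pins down $\norm{\bs{\mu}}_1$ uniquely; combined with $\norm{\bs{\mu}}_1 < \norm{\bs{\xi}}_1$ (as observed in the proof of Theorem \ref{thm:efficiency}), this also forces strict separation.
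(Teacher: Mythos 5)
Your proposal is correct and follows essentially the same route as the paper: the equality is the immediate cancellation $\sum_i W_i(\bs{\mu}) - I_i(\bs{\mu}) = \sum_i U_i(\bs{\mu})$ from the definition $W_i = U_i + I_i$, and the strict inequality is inherited from Theorem~\ref{thm:efficiency}. If anything, you are more careful than the paper, which simply asserts the strict aggregate-surplus gap as a recollection of Theorem~\ref{thm:efficiency}, whereas you explicitly justify strictness via the Pareto-inefficiency of $\bs{\xi}$ (or, alternatively, via uniqueness of the maximizer together with $\bs{\mu}\neq\bs{\xi}$).
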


Next, we prove that the incentives mechanism is individual rational.
\modification{
A mechanism is individual rational if agents voluntarily accept the rules imposed by the mechanism. 
Formally, a mechanism is individual rationality
if the  benefit obtained by any agent $i\in\mathcal{P}$ 
with the mechanism
is greater than the benefit of not participating, which is assumed to be zero. 
%

%
\begin{theorem}\label{thm:rationality}
 The mechanisms with incentives in Eq.~(\ref{eq:I_i_final}) is individual rational, that is, 
 $
  W_i(\mu_i,\bs{\mu}_{-i}) \geq 0,
 $
 for all $\bs{\mu}$.
\end{theorem}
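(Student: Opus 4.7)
The plan is to exploit two facts together: (i) under the fairness-preserving incentives of Eq.~(\ref{eq:I_i_final}), the optimal profile $\bs{\mu}$ of the original game $G$ is a Nash equilibrium of the game with incentives $G_I$ (as stated in the discussion right after Eq.~(\ref{eq:W_i})), so $\mu_i$ is a best response to $\bs{\mu}_{-i}$ in $G_I$; and (ii) the alternative action $q_i = 0$ already yields a non-negative payoff $W_i(0,\bs{\mu}_{-i})$. Combining both gives $W_i(\mu_i,\bs{\mu}_{-i}) \ge W_i(0,\bs{\mu}_{-i}) \ge 0$.

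Concretely, the first step is to write out
\begin{equation*}
W_i(0,\bs{\mu}_{-i}) \;=\; v_i(0) \;-\; \|\bs{\mu}_{-i}\|_1\, p\!\left(\|\bs{\mu}_{-i}\|_1\right) \;+\; \|\bs{\mu}_{-i}\|_1\, p\!\left(\tfrac{N}{N-1}\|\bs{\mu}_{-i}\|_1\right),
\end{equation*}
using the definition of $W_i$ in Eq.~(\ref{eq:u_incentives}). The second step is to observe that $\tfrac{N}{N-1}\ge 1$ for $N\ge 2$, so $\tfrac{N}{N-1}\|\bs{\mu}_{-i}\|_1 \ge \|\bs{\mu}_{-i}\|_1$; since $p(\cdot)$ is non-decreasing (Assumption~\ref{as:1}), the last two terms above combine to something non-negative. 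Together with $v_i(0)=0$, this yields $W_i(0,\bs{\mu}_{-i})\ge 0$.

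The third step is to invoke the Nash-equilibrium property of $\bs{\mu}$ in $G_I$: for every $q_i \ge 0$,
\begin{equation*}
W_i(\mu_i,\bs{\mu}_{-i}) \;\ge\; W_i(q_i,\bs{\mu}_{-i}).
\end{equation*}
Specializing to $q_i=0$ and chaining with the previous inequality gives the claim $W_i(\mu_i,\bs{\mu}_{-i})\ge 0$.

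I do not expect a real obstacle here, because the argument is essentially a two-line deviation argument: the only subtlety is making sure that one is entitled to invoke "$\bs{\mu}$ is a Nash equilibrium of $G_I$." This is the content reused from \cite{barreto2013design} and stated just below Eq.~(\ref{eq:W_i}), and it rests on Assumptions~\ref{as:1}--\ref{as:2} (which guarantee existence and uniqueness through the FOC). Once that is cited, the rest is just the monotonicity of $p(\cdot)$ together with $v_i(0)=0$, with no case analysis or delicate estimates required.
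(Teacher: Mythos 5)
Your proposal is correct and its core computation is exactly the paper's: evaluate $W_i(0,\bs{\mu}_{-i})$ via Eq.~(\ref{eq:u_incentives}), note $v_i(0)=0$, and use the monotonicity of $p(\cdot)$ together with $\tfrac{N}{N-1}\geq 1$ to conclude non-negativity. The one place you go beyond the paper is your third step: the paper's proof stops after showing $W_i(0,\bs{\mu}_{-i})\geq 0$ and simply asserts the conclusion, whereas the stated theorem is about $W_i(\mu_i,\bs{\mu}_{-i})$; your explicit invocation of the best-response property of $\bs{\mu}$ in $G_I$ (so that $W_i(\mu_i,\bs{\mu}_{-i})\geq W_i(0,\bs{\mu}_{-i})$) is the missing link that the paper leaves implicit, and including it makes the argument complete rather than merely suggestive.
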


}

 So far, we have verified that in a strategic environment the aggregate surplus is optimal with the adoption of incentives (see Proposition~\ref{prop:improvement}). Also, the mechanism guarantees that the surplus of an individual is always positive. However, an individual that is enrolled in an inefficient system might migrate toward a system that implements incentives only if its  profit is not be reduced after the change. 
 The following result guarantees that the agents that have a lower consumption with respect to the average  can expect a greater surplus in the system with incentives.

 \begin{theorem}\label{thm:benefit_agents}
  Every agent that consumes less resources than the average in the optimal outcome of the game $G$
  can expect a greater profit in the Nash equilibrium of the game with incentives $G_I$.
  That is, 
  if $\mu_i < \frac{1}{N} \sum_{h\in\mathcal{P}} \mu_h$, then $U_i(\bs{\mu}) < W_i(\bs{\mu})$. Otherwise, 
  $U_i(\bs{\mu}) \geq W_i(\bs{\mu})$.
 \end{theorem}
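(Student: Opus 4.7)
The plan is to reduce the claim to a sign analysis of the incentive function $I_i(\bs{\mu})$. By construction $W_i(\bs{q}) = U_i(\bs{q}) + I_i(\bs{q})$, so the inequality $U_i(\bs{\mu}) < W_i(\bs{\mu})$ is equivalent to $I_i(\bs{\mu}) > 0$, while $U_i(\bs{\mu}) \geq W_i(\bs{\mu})$ is equivalent to $I_i(\bs{\mu}) \leq 0$. Thus the theorem is really a statement about when $I_i$ evaluated at $\bs{\mu}$ is positive, and I would open the proof by making this reduction explicit.

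Next, I would invoke the closed form of the incentive for the affine price scheme, namely Eq.~(\ref{eq:I_i_ideal}):
\begin{equation*}
I_i(\bs{\mu}) = \beta \Bigl(\sum\nolimits_{j\neq i} \mu_j\Bigr)\Bigl(\tfrac{1}{N-1}\sum\nolimits_{j\neq i} \mu_j - \mu_i\Bigr).
\end{equation*}
Since $\beta > 0$ and $\sum_{j\neq i}\mu_j \geq 0$ (using the non-negativity of demands from Assumption~\ref{as:1}--\ref{as:2}), the first two factors are non-negative, and thus the sign of $I_i(\bs{\mu})$ is completely determined by the sign of the third factor.

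The final step is the algebraic manipulation of this third factor. Rewriting $\sum_{j\neq i}\mu_j = \sum_{h\in\mathcal{P}}\mu_h - \mu_i$ gives
\begin{equation*}
\tfrac{1}{N-1}\sum\nolimits_{j\neq i}\mu_j - \mu_i = \tfrac{1}{N-1}\Bigl(\sum\nolimits_{h\in\mathcal{P}}\mu_h - N\mu_i\Bigr),
\end{equation*}
which is strictly positive iff $\mu_i < \tfrac{1}{N}\sum_{h\in\mathcal{P}}\mu_h$ and non-positive otherwise. Combining with the previous paragraph yields $I_i(\bs{\mu})>0$ in the first case and $I_i(\bs{\mu})\leq 0$ in the second, and the theorem follows.

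I do not anticipate any real obstacle: the statement is essentially a sign computation once the explicit form of $I_i$ from Eq.~(\ref{eq:I_i_ideal}) is substituted. The only mild subtlety is the degenerate case $\sum_{j\neq i}\mu_j = 0$, for which $I_i(\bs{\mu})=0$ and $\mu_i = \tfrac{1}{N}\sum_h \mu_h = 0$, so the weak-inequality branch of the theorem still holds trivially.
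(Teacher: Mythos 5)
Your proof is correct and follows essentially the same route as the paper's: both reduce the claim to the sign of $W_i(\bs{\mu})-U_i(\bs{\mu})=I_i(\bs{\mu})=\norm{\bs{\mu}_{-i}}_1\bigl(p\bigl(\tfrac{N}{N-1}\norm{\bs{\mu}_{-i}}_1\bigr)-p(\norm{\bs{\mu}}_1)\bigr)$ and observe that this is positive exactly when $\tfrac{1}{N-1}\norm{\bs{\mu}_{-i}}_1>\mu_i$, i.e.\ when $\mu_i$ is below the population average. The only (immaterial) difference is that you substitute the affine form of $p$ and do the sign check algebraically, whereas the paper argues via monotonicity of $p$ so the same proof works for any increasing price function.
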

%


Theorem \ref{thm:benefit_agents} shows that some users can expect a higher surplus in the system with incentives ($G_\mathcal{I}$).
This happens because the low consumption is rewarded in $G_\mathcal{I}$.
The extent to which an agent can expect major surplus in $G_\mathcal{I}$, with respect to the inefficient outcome $\bs{\xi}$ is an open problem.

\modification{
In summary, a careful design of the incentives properties is necessary to encourage the adoption of the mechanism. 
For instance, we know that customers would join the DR program because it guarantees positive surplus (see Theorem \ref{thm:rationality}). 
Furthermore, if it is their choice, some customers would prefer a program with incentives (see Theorem \ref{thm:benefit_agents}). 
These properties can be assured partly because the mechanism has budget deficit. On the contrary, if the mechanism is weak budget balanced, then  customers would have to face higher taxes. 
}

\section{Decentralized Implementation of the Mechanism}\label{sec:implementation}

\begin{figure}[bt]
 \centering
 \includegraphics[width=.45 \textwidth]{./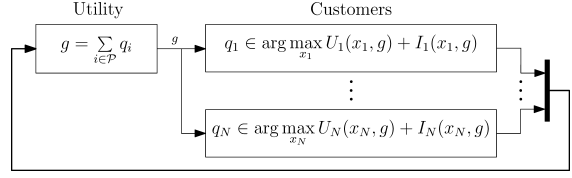}
 \caption{Decentralized implementation of the mechanism. Each agent must compute its optimal consumption profile $q_i$ in function of the aggregate consumption profile $g$.}
 \label{fig:decentralized}
\end{figure}

In previous sections we have analyzed the characteristics of the electricity system market at different equilibrium points. 
In this section  we are concerned with the behavioral modeling (dynamics) of rational individuals that are involved in a game. 
Particularly, we show that the Nash equilibrium can be learned in a decentralized manner.

\modification{
Let us modify the notation used in previous sections to allow multiple time periods.
We divide a period of 24 hours into $T$ disjoint time intervals, denoted by $\pi_1,\ldots,\pi_T$,
that satisfy $\cup_{k \in\{1,\ldots,T\}} \pi_k = [0,24)$ and $\cap_{k\in\{1,\ldots,T\}} \pi_k = \varnothing$. The consumption of the $i\th$ customer in the $k\th $ time interval is denoted by  $q_i^k$.
Thus, the daily consumption profile of a customer is represented with the vector $\bs{q}_i=[q_i^1,\ldots, q_i^T ] ^\top \in \mathbb{R}_{\geq 0}^T$.
Likewise, the electricity consumption of the population is denoted by the 
vector $\bs{q} = [\bs{q}_1^\top, \ldots, \bs{q}_N^\top]^\top \in \mathbb{R}_{\geq 0}^{T\cdot N}$, 
and the vector
$\bs{q}^{k} = [ q_1^k, \ldots, q_N^k ]^\top$ represents the electricity consumption of all customers at the $k\th$ time interval.
On the other hand, the vectors 
$\bs{q}_{-i} = [\bs{q}_1^\top, \ldots, \bs{q}_{i-1}^\top, \bs{q}_{i+1}^\top, \ldots,  \bs{q}_N^\top]^\top \in \mathbb{R}_{\geq 0}^{T\cdot (N - 1)}$ 
and
$\bs{q}_{-i}^{k} = [ q_1^k, \ldots, q_{i-1}^k, q_{i+1}^k, \ldots,  q_N^k ]^\top$
represent the consumption of the population without the $i\th$ agent 
during a day or during the $k\th$ time period, respectively.
The valuation function $v_i^k:\mathbb{R}_{\geq0}\rightarrow \mathbb{R}$ represents the economic value that the $i\th$ user assigns to $q_i^k$ electricity units in the $k\th$ time interval. 
}

We propose an indirect revelation mechanism that 
allows each customer to use the resources without any restriction.  
In this way it distributes the computation tasks among the population. 
Specifically, each individual is responsible for optimizing its own consumption based on the total demand of the society, that can be  broadcasted by the electric utility 
(see Fig.~\ref{fig:decentralized}).

We can think on automation devices that optimize the energy consumption (following some dynamics), based on particular preferences (local information) and reports from the central entity.
%
Therefore, we assume that each customer's automation device carries out a learning process to adjust its consumption to the 
price signals sent by the electric utility.
In other words, the learning process solves a resource allocation problem, in which each individual finds the amount of resources that should be used 
in a given time period.
Here, we assume that the daily consumption of each user is bounded by $Q_i$, which can be interpreted as the maximum consumption capacity of the $i\th$ customer.
Let us generalize the surplus function 
\modification{
with incentives in Eq. (\ref{eq:W_i}) as
\begin{multline}
W_i (\bs{q}_i,\bs{q}_{-i}) =
\sum\nolimits_{k=1}^T  \left( 
v_i^k(q_i^k) -  \norm{\bs{q}^k}_1 \, p\left( \norm{\bs{q}^k}_1 \right) + \right.
\\
 \left. \norm{\bs{q}_{-i}^k}_1 \, h_i(\bs{q}_{-i}^k)
\right)
\end{multline}
}
Accordingly, the optimization problem that customer agent solves (see Eqs.~(\ref{eq:opt_problem_games}) and (\ref{eq:u_incentives})) can be rewritten as
\begin{equation}\label{eq:opt_problem_games_pop}
\begin{aligned}
& \underset{\bs{q}_i}{\text{maximize}}
& & W_i (\bs{q}_i,\bs{q}_{-i}) \\
& \text{subject to}
& & \sum\nolimits_{k=1}^{T} q_i^k  \leq Q_i \\
& & & q_i^k \geq 0,  i =\{1,\ldots,N\}, k =\{1,\ldots,T\}.
\end{aligned}
\end{equation}
If $Q_i$ is large enough, we can assure that the solution to the optimization problems in  (\ref{eq:opt_problem_games}) and (\ref{eq:opt_problem_games_pop}) is the same. 
This formulation is convenient to define the fitness and strategies in the population game defined below. Now, let us introduce some notation to be consistent with the literature in population games \cite{sandholm_book}.

\begin{remark}
 This model does not has into account other factors that impact the satisfaction of users, such as quality (e.g., continuity of the service, variation in voltage), reliability, and security of the service. The regulation made by the ISO encourages high quality of service through economic incentives, and therefore imposes additional restriction on the electric utility (which are not considered here). We refer the reader interested in security aspects to \cite{BarretoACSAC14}, where it is shown that this decentralized scheme might be more resilient to fraud and malicious attacks than centralized schemes (such as direct load control).
\end{remark}

\subsection{Population Games Approach}

In this case, we assume that each user implements some evolutionary dynamics to maximize her own surplus (see problem in Eq. (\ref{eq:opt_problem_games_pop})). 
Thus, the electricity game can be seen as a multi-population game, in which each customer represents a population in the society $\mathcal{P}$.
%
%
%
%
The evolutionary dynamics are differential equations that describe changes in the strategies adopted in the population (in this case the demand along the day).
%
%
%
The population dynamics approach can be used to solve optimization problems with restrictions, as it is shown in \cite{pantoja}. 

Let us formulate the population game as follows.
We consider a society composed by $N$ populations with $T+1$ possible strategies per each population.
For a given population $i$, the $k\th$ strategy's expected value is denoted by 
$q_i^k$,
for $k \in \{1,\ldots,T\}$, i.e., the population's strategy is the amount of resources consumed in each time period. 
Moreover, the strategy 
$q_i^{T+1}$ 
is a slack variable that represents the  power not consumed in any time interval  and it is modeled as a consumption in the fictitious $(T+1)\th$ time interval. 
The slack variable is defined as
\begin{equation}
q_i^{T+1} = Q_i - \sum\nolimits_{k=1}^T q_i^k.
\end{equation}
Now, let us define the fitness (or payoff) function $F_k^i:\mathbb{R}^N \to \mathbb{R}$ for the $k\th$ strategy in the $i\th$ population as the derivative of the surplus function $W_i(\bs{q})$,  i.e.,  the fitness is equal to the marginal surplus of the $i\th$ population,  defined as
\begin{equation}\label{eq:fitness}
F_k^i( \bs{q}^k ) = \frac{\partial W_i}{\partial q_i^k}(\bs{q}^k) ,
\end{equation}
for $k \in \{1,\ldots,T\}$. 
On the other hand, the fitness of the fictitious variable (consumption in $k=T+1$) is defined as zero, i.e., $F_{T+1}^i = 0$.
The election of these fitness functions guarantees that the population game is a potential game.
Potential games are a class of population games that can be solved using multiple dynamics \cite{sandholm_book}. 
\modification{
\begin{proposition}
The proposed population game is a potential game with potential function $\varPsi(\bs{q}) = \sum_{i\in\mathcal{P}} U_i(\bs{q}_i, \bs{q}_{-i}) $.  
%
%
\end{proposition}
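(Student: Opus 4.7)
The plan is to verify directly Sandholm's defining condition for a multi-population potential game, namely that $F_k^i(\bs{q}) = \partial \varPsi(\bs{q}) / \partial q_i^k$ holds for every population $i \in \mathcal{P}$ and every strategy index $k \in \{1,\ldots,T+1\}$, with $\varPsi$ as proposed. Since both $\varPsi$ and $W_i$ are explicit sums over time periods, this reduces to a chain-rule calculation, and the whole argument rests on a single structural observation about how the incentive was designed.

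First, I would expand the candidate potential as $\varPsi(\bs{q}) = \sum_{j\in\mathcal{P}} \sum_{\ell=1}^T \bigl[ v_j^\ell(q_j^\ell) - q_j^\ell \, p(\norm{\bs{q}^\ell}_1) \bigr]$. Only the $\ell=k$ slice depends on $q_i^k$, and after applying the chain rule and collecting $\sum_j q_j^k = \norm{\bs{q}^k}_1$, one obtains $\partial \varPsi / \partial q_i^k = (v_i^k)'(q_i^k) - p(\norm{\bs{q}^k}_1) - \norm{\bs{q}^k}_1\, p'(\norm{\bs{q}^k}_1)$.

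Next, I would compute $\partial W_i / \partial q_i^k$ directly from the multi-period form of $W_i$. The crucial point is that the incentive contribution $\norm{\bs{q}_{-i}^k}_1\, h_i(\bs{q}_{-i}^k)$ is, by construction, a function of $\bs{q}_{-i}^k$ alone, so it vanishes under $\partial/\partial q_i^k$. The surviving pieces $v_i^k(q_i^k) - \norm{\bs{q}^k}_1\, p(\norm{\bs{q}^k}_1)$ contribute exactly $(v_i^k)'(q_i^k) - p(\norm{\bs{q}^k}_1) - \norm{\bs{q}^k}_1\, p'(\norm{\bs{q}^k}_1)$, matching the previous expression term for term. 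For the slack strategy $k=T+1$ both sides vanish by definition, since $\varPsi$ is independent of $q_i^{T+1}$ and $F_{T+1}^i \equiv 0$ was imposed.

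The conceptual reason the construction works is that $I_i$ was tailored so that $W_i - \sum_{j\in\mathcal{P}} U_j$ depends only on $\bs{q}_{-i}$; this is precisely what is needed for $W_i$ and the aggregate surplus to share all partial derivatives in the $q_i^k$-direction, which is exactly the multi-population potential-game condition. I do not foresee a genuine obstacle; the only real care needed is book-keeping the per-period decomposition and keeping the per-period norm $\norm{\bs{q}^k}_1 = \sum_j q_j^k$ distinct from the daily aggregate, so that cross-period terms are not inadvertently differentiated.
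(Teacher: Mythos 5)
Your proposal is correct and follows the same route the paper implicitly takes: verify Sandholm's condition $\partial\varPsi/\partial q_i^k = F_k^i = \partial W_i/\partial q_i^k$ by noting that the incentive term $\norm{\bs{q}_{-i}^k}_1 h_i(\bs{q}_{-i}^k)$ is independent of $q_i^k$, so $W_i$ and the aggregate surplus share all own-partials. Your explicit chain-rule computation and the remark that $W_i - \sum_j U_j$ depends only on $\bs{q}_{-i}$ simply make precise what the paper leaves as an immediate consequence of its definition of the fitness functions.
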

}

\modification{
In this case, we use the definition of potential games for continuous sets of strategies defined in \cite{sandholm_book}. Specifically, a population game with fitness functions $F_k^i$ is potential game if there exist a function $\varPsi$ such that
\begin{equation}
\frac{\partial \varPsi}{\partial q_i^k} (\bs{q})=F_k^i(\bs{q}^k).
\end{equation}
The main property of potential games is that the Nash equilibria of the game is characterized by the Karush-Kuhn-Tucker first order necessary conditions of the potential function. This implies that the incentives of all players are mapped into one function and the set of pure Nash equilibria can be found by locating the local optima of the potential function.
}

\modification{
The dynamics are built assuming myopic behavior, hence, the decisions are made using past information of the aggregated demand. For instance, users choose the future consumption at a particular time period based on the consumption from the day before. However, the updates for the whole day must be calculated at the same time, because the dynamics couple different time periods. 
}

\subsection{Evolutionary Dynamics} \label{sec:dynamics}

We implement 
four evolutionary dynamics, namely \emph{logit dynamics} (Logit), \emph{replicator dynamics} (RD), \emph{Brown-von Neumann-Nash dynamics} (BNN), and 
\emph{Smith dynamics}, which belongs to the family of
\emph{perturbed optimization}, \emph{imitative dynamics}, \emph{excess payoff dynamics}, and 
 \emph{pairwise comparison dynamics} \cite{hofbauer2001nash, sandholm_book}. 
The following differential equation describe the evolution in time of each strategy 
\subsubsection{Logit Dynamics}
\begin{equation}\label{eq:logit}
 \dot{x}_k^i = \frac{ \exp\left(\eta^{-1} F_k^i (\bs{x}) \right) }{ \sum_{\gamma \in S} \exp\left(\eta^{-1} F_\gamma^i (\bs{x}) \right) }, \, \, \eta>0,
\end{equation}

\subsubsection{Replicator Dynamics}
\begin{equation}\label{eq:replicator}
\dot{x}_k^i = x_k^i \, \hat{F}_k^i \left( \bs{x} \right).
\end{equation}

\subsubsection{Brown-von Neumann-Nash Dynamics (BNN)}
\begin{equation}\label{eq:bnn}
 \dot{x}_k^i = \left[ \hat{F}_k^i \left( \bs{x} \right) \right]_+ - x_k^i  \sum\nolimits_{\gamma \in S} \left[ \hat{F}_\gamma^i \left( \bs{x} \right) \right]_+
\end{equation}
\subsubsection{Smith Dynamics}
\begin{multline}
\dot{x}_k^i  = \sum\nolimits_{\gamma \in S} x_\gamma^i  \left[ F_k^i \left( \bs{x} \right) - F_\gamma^i \left( \bs{x} \right) \right]_+ 
\\
- x_k^i  \sum\nolimits_{\gamma \in S} \left[ F_\gamma^i ( \bs{x}) - F_k^i( \bs{x} ) \right]_+.
\label{eq:smith}
\end{multline}

This dynamic is  defined in function of the excess payoff to strategy $k$ as $\hat{F}_k^i =  F_k^i(\bs{q}^k) - \bar{F}_k^i(\bs{q}^k)$, where $\bar{F}_k^i(\bs{q}^k)$ is the average payoff the population $i$.
Since the potential function $\varPsi(\cdot)$ is a concave function, we know 
that the population game has a unique Nash equilibrium, which corresponds to the maximum of $\varPsi$. 
The dynamics in Eq.~(\ref{eq:smith}) satisfies the \emph{positive correlation} (PC) property. 
Hence, according to Lemma 7.1.1 \cite{sandholm_book}, $\varPsi$ is a Lyapunov function for the differential equation and it can be shown that the Nash equilibrium is globally asymptotically stable with Smith dynamics. 
Moreover, Replicator dynamics is locally stable because it does not necessarily converge to the Nash equilibrium. 
In particular, the solutions to (\ref{eq:replicator}) might not reach the Nash equilibrium if the initial conditions are in the border of the simplex $X^i$ (replicator admits solutions/rest points that are not NE as well as  closed orbits). \modification{Analogous}  convergence results can be derived for the perturbed best response (logit) dynamics.

It is important to highlight that in this implementation we use two different time domains. 
On the one hand, we represent a daily time domain by means of $k\in\{1,\ldots,T\}$. 
This time domain is discrete and represents different time intervals during a day. 
On the other hand, the evolutionary dynamics introduce a continuous time domain related to the evolution of the differential equations. 
The scale of this continuous time domain can be considered much larger than the daily time domain, since adjustments in the consumption are considered to be slow.

\section{Simulation Results}\label{sec:sim}

In this section, we illustrate some ideas of efficiency and the decentralized implementation of the incentives mechanism.
\modification{
Part of the simulations are available at \cite{toolbox}.
}
In these experiments we select some functions used previously in the literature   that satisfy Assumptions  \ref{as:1} and \ref{as:2} (see \cite{ mitter_2011, mohsenian}).
On the one hand, we define the valuation functions as 
\begin{equation}\label{eq:valuation_sim}
 v_i^k (q_i^k) = \alpha_i^k \log(1+q_i^k)
\end{equation}
where $\alpha_i^k>0$ is the parameter that characterizes the valuation of the  $i\th$ agent at the $k\th$ time instant.
On the other hand, the generation cost function is defined as 
\begin{equation}\label{eq:cost_sim}
 C(\|\bs{q}\|_1) = \beta ({\|\bs{q}\|_1})^2, 
\end{equation}
and the unitary price function is
\begin{equation}\label{eq:p_sim}
 p(\|\bs{q}\|_1) = \frac{C(\|\bs{q}\|_1)}{\|\bs{q}\|_1} = \beta \|\bs{q}\|_1. 
\end{equation}
\modification{
For simplicity,
}
the generation cost only depends on the aggregate consumption, not on the time of the day. Furthermore, 
the fitness function of the system with incentives (see Eq. (\ref{eq:fitness})) is
\begin{equation}\label{eq:fitness_without_i_sim}
F_k^i( \bs{q}^k)  =  \frac{\alpha_i^k }{1+q_i^k}
 - 2\beta \left( \sum\nolimits_{j\in\mathcal{P}} q_j^k  \right).
\end{equation}

We define $N=5$ users, $Q_i=30 KWh$ for all $i$, $\beta = 1$,  $T = 24$, and random initial conditions that satisfy $\sum_{k=1}^{T+1} x_k^i (0) = Q_i$. 
%
In order to model time varying valuations along a day, we assign to $\alpha_i^k$ a value proportional to the actual consumption in an electrical system. 
\modification{
In this case, we 
use the consumption measurements provided by the Colombian electricity system administrator to choose $\alpha_i^k$ proportional to the total demand during the $k\th$ time period 
\cite{xm}.}
We define a heterogeneous society composed by individuals with different valuations, such that
\begin{equation}
\alpha_i^k < \alpha_j^k,    
\end{equation}
for all $i,j\in\mathcal{P}$ with \modification{$i < j$} and $k\in\{1,\ldots,T\}$. Thus, the $i\th$ user has lower preferences, at any time interval, than the ${i+1}\th$ user.


\subsection{Inefficiency Example}

In Fig.~\ref{fig:surplus} we show the social surplus and the total demand of the optimal and suboptimal  solutions of the game $G$ (without incentives). We verify that the population's demand is lower at the optimal solution. Also, Fig.~\ref{fig:surplus} shows that the surplus of the society is greater at the optimal outcome. These properties are characteristics of the tragedy of the commons.
\begin{figure}[bt]
 \centering
 \includegraphics[width=.45\textwidth]{./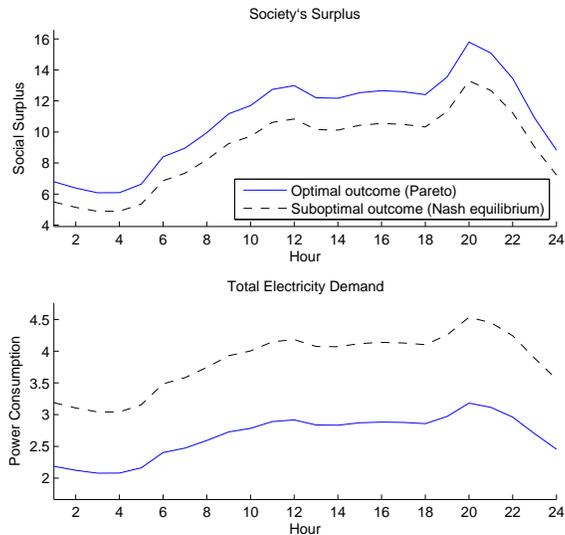}
 \caption{Social surplus and total demand  in both optimal and suboptimal solutions.
 }
 \label{fig:surplus}
\end{figure}
%

\modification{
Now we are interested in finding the relation between the optimal and suboptimal solutions as a function of the number of customers $N$. Fig. \ref{fig:ratio} shows the change in both demand and surplus of the optimal solution with respect to the suboptimal case for the peak hour, which takes place at the same time in both solutions.
As $N$ grows the ratio $\sfrac{\big\lVert\bs{\mu}^{k}\big\lVert_1}{\big\lVert\bs{\xi}^{k}\big\lVert_1}$ approaches $\sfrac{1}{2}$. In other words, the optimal demand is at least a half of the suboptimal demand. 
Moreover, the quotient of the social surplus in the optimal and suboptimal solutions increases with $N$. The efficiency loss boundary presented in \cite{johari2006scalable} does not apply here because we use the average cost price scheme, rather than marginal cost price.
}

\begin{figure}[tb]
        \centering

        \begin{subfigure}[b]{.45\textwidth}
                \centering
		  \includegraphics[width=\textwidth, clip=true]{./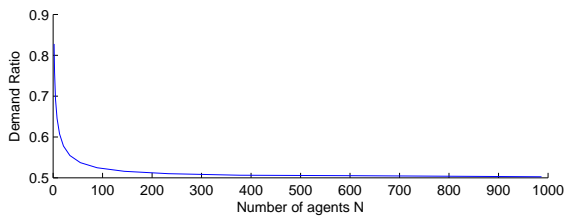}
                   \caption{Quotient of the total demand in the optimal and suboptimal solutions.}
		 \label{fig:demand_ratio}
        \end{subfigure}%
        
        \begin{subfigure}[b]{.45\textwidth}
                \centering
		  \includegraphics[width=\textwidth]{./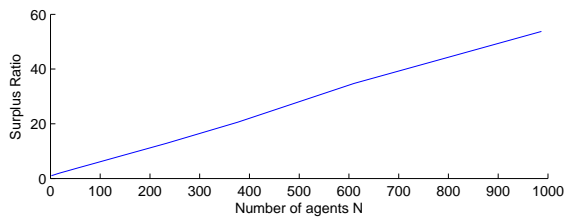}
                   \caption{Quotient of the social surplus in the optimal and suboptimal solutions.}
		 \label{fig:surplus_ratio}
        \end{subfigure}%

        \caption{ Relation of the total demand and the social surplus in both the optimal and suboptimal solutions.}
	 \label{fig:ratio}
\end{figure}

\modification{
%
Recall from Section~\ref{sec:backgnd} that the ratio $\sfrac{\big\lVert\bs{\mu}^{k}\big\lVert_1}{\big\lVert\bs{\xi}^{k}\big\lVert_1}$ might be affected by restrictions of the form $m_i \leq q_i^k \leq M_i$. Here we variate $m_i$ to observe the change in the PAR ratio from Eq. (\ref{eq:par_ratio}). Fig. \ref{fig:par_restrictions} shows that the the PAR ratio improves if $\min_k \norm{\bs{\mu}^k}_1 \leq m_i \leq \max_k \norm{\bs{\mu}^k}_1$, because in this case the PAR of the optimal solution is reduced, while the PAR of the sob-optimal solution is the same. However, if 
$m_i \geq \min_k \norm{\bs{\xi}^k}_1$ then the PAR ratio decreases, because the restriction affects the sub-optimal solution and reduces its PAR.
}

  \begin{figure}[t]
    \centering
    \begin{tikzpicture}
      \node[anchor=south west, inner sep=0pt] (fig) at (0, 0)
      { \includegraphics[width=.45\textwidth, clip=true]{./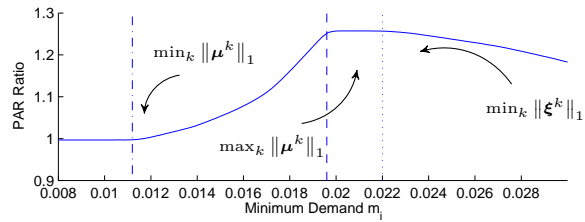} };
      \node[] (a) at (2.6, 2.25) {\tiny $\min_k \norm{\bs{\mu}^k}_1$};
      \node[] (a_end) at (1.81, 1.25) {}
      edge[pil, bend left=40] (a.225);
      
      \node[] (b) at (3.5, 1.0) {\tiny $\max_k \norm{\bs{\mu}^k}_1$};
      \node[] (b_end) at (4.65, 2.2) {}
      edge[pil, bend left=35] (b.40);

      \node[] (c) at (7, 1.5) {\tiny $\min_k \norm{\bs{\xi}^k}_1$};
      \node[] (c_end) at (5.28, 2.2) {}
      edge[pil, bend left=35] (c.135);      
      
    \end{tikzpicture}  
 \caption{PAR improvement reached when considering restrictions of the form $m_i \leq q_i^k$ with different values of $m_i$.}
 \label{fig:par_restrictions}
  \end{figure}

Now, let us analyze the ideas related with the dynamical systems. 
In order to the analyze the response of the population to the economic incentives, we introduce  incentives in the time period \modification{contained} between 2 and 4 seconds (see Fig. \ref{fig:dynamics}). 
The introduction of the incentives produces an increment in population average utility, 
as well as a reduction in the average consumption. Also, note that the total incentives delivered to the
heterogeneous population are different from zero. 
%

\begin{figure}[hbt]
 \centering
 \includegraphics[width=.5\textwidth]{./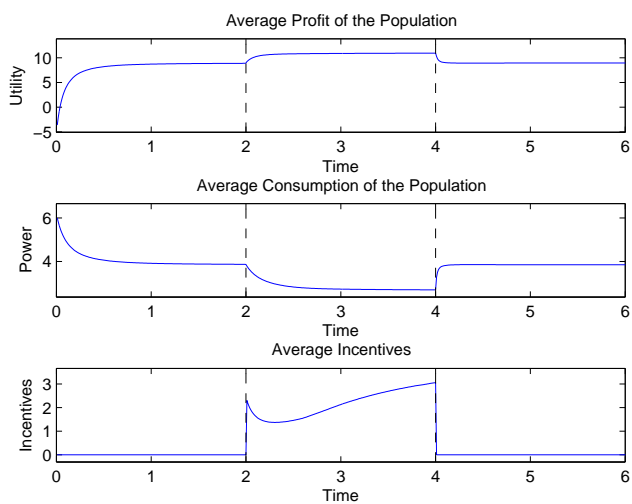}
 \caption{Dynamics of a heterogeneous population with incentives using BNN dynamics.}
 \label{fig:dynamics}
\end{figure}

Fig.~\ref{fig:incentives_eq} shows the incentive given to each user in the optimal equilibrium. Note that the user with a lower valuation (user 1) is the one that receives more incentives, while the user with larger valuations (user 5) has the lower incentives. This happens because the user with lower valuations can reduce its consumption much more than a user with higher valuations, and consequently,  receives more incentives.

\begin{figure}[hbt]
 \centering
 \includegraphics[width=.45\textwidth]{./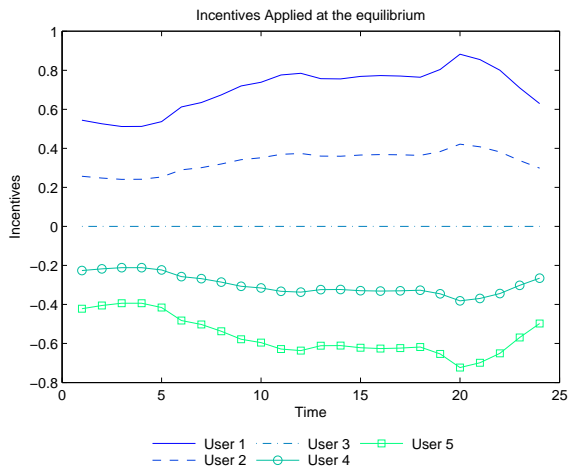}
 \caption{Incentives of a heterogeneous population at the optimal equilibrium.}
 \label{fig:incentives_eq}
\end{figure}

\subsection{Evolutionary Dynamics}

\begin{figure}[bt]
 \centering
 \includegraphics[width=.45\textwidth]{./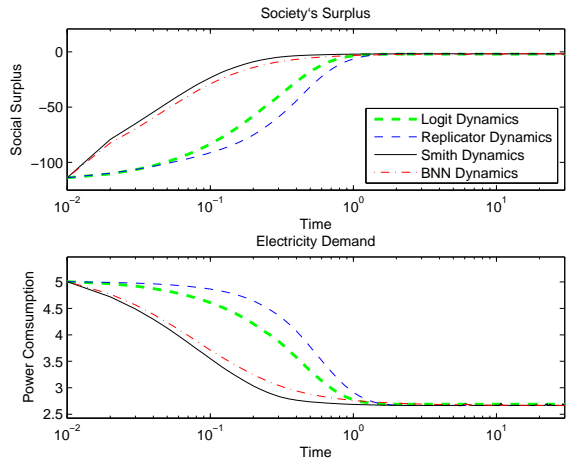}
 \caption{Evolution of the society's daily average surplus and power consumption.}
 \label{fig:dynamics_u}
\end{figure}

\begin{figure}[bt]
 \centering
 \includegraphics[width=.45\textwidth]{./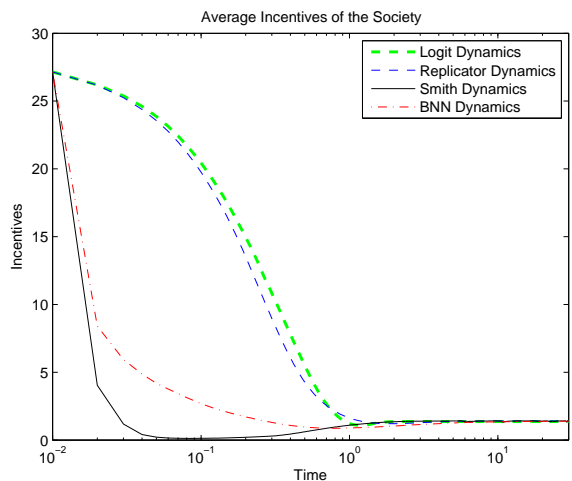}
 \caption{Evolution of the incentives with four different dynamics.}
 \label{fig:dynamics_i}
\end{figure}



Now, let us analyze the ideas related with the dynamical systems. 
The evolution of social surplus, demand, and incentives for different dynamics are shown in Figs.~\ref{fig:dynamics_u} and \ref{fig:dynamics_i}. Note that despite using the same initial condition, the evolution of the system is different with each dynamical model. In particular, BNN and Smith dynamics converge faster to the  optimum.
This is achieved by means of a fast decrease in the power consumption. 

Incentives in Fig.~\ref{fig:dynamics_i} show that, in the long run, all dynamics converge to the same level of incentives. Particularly, Smith dynamics requires more incentives during all time, except for logit dynamics, which has a sudden increase in the incentives close to the equilibrium point. 

In Fig.~\ref{fig:dynamics_i} it is not clear which dynamical model moves the state of the system to the optimal equilibrium using less resources. To answer this question, we simulate the total amount of incentives used by each model.
Thus, let us define the aggregate incentives in a society in a particular time $t$ as
\begin{equation}
 I_d (t) = \sum\nolimits_{i\in\mathcal{P}} \frac{1}{|S|} \sum\nolimits_{k\in S} I_i \left( \bs{q}^k (t) \right).
\end{equation}
Now, the total accumulated incentives from $t_0$ to $t$ is defined as 
\begin{equation}
 \varPhi_d (t) = \int_{t_0}^t I_d (\tau) d\tau.
\end{equation}
Thus, $\varPhi_d (t)$ gives a measurement of the total amount subsidies required by the system with dynamic $d$, in the time interval  $[t_0, t]$.
In this case we do not have a reference to compare the subsidies requirements of each evolutionary dynamic. Hence, we compare the subsidies requirements with the average requirements of all the dynamics implemented. 
In order to see which dynamic requires more resources, we plot the cumulative resources required by each dynamic relative to the average.
Hence, we define the cumulative incentives as 
\begin{equation}
CI_d = \frac{ \varPhi_d (t) }{ \sum_{d\in \mathcal{D}} \varPhi_d (t) }.
\end{equation}
Fig.~\ref{fig:integral} shows the results of the simulation of the relative subsidies required by each model of evolutionary dynamics.
Smith and BNN dynamics require less resources, while logit has the higher requires higher incentives. However, BNN has the lower incentives in long run.

\begin{figure}[bt]
 \centering
 \includegraphics[width=.45\textwidth]{./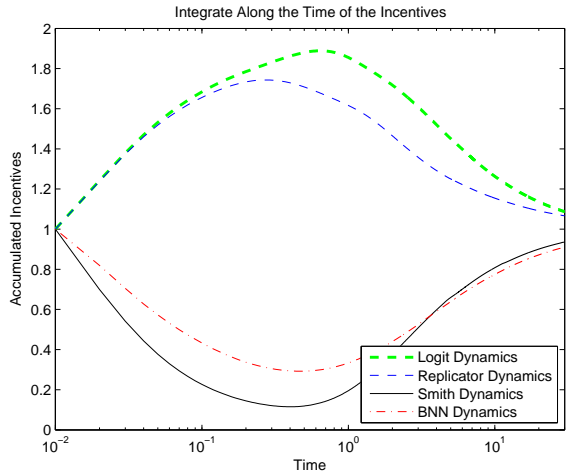}
 \caption{Accumulated incentives during the evolution of the algorithm.}
 \label{fig:integral}
\end{figure}


\section{Conclusions and Future Directions}
\label{sec:concl}

In this paper we propose an indirect revelation mechanism to maximize the aggregate surplus of the population. The main feature of this mechanism is that it does not require private information from users, and employs a one-dimensional message space per resource to be allocated.
These properties facilitate the distributed implementation of the mechanism. The mechanism entrusts the computation tasks among users, who should maximize its own surplus function based the aggregate demand (that is calculated and broadcasted by a central agent). Thus, users avoid revelation of private information (e.g., preferences), but are required to report the aggregate consumption of their appliances during some time periods.

We show that most users of the electricity system would join the incentives program voluntarily, since they might have positive surplus. Particularly, users that consume less resources than the average can expect a higher surplus in the system with incentives, because the low consumption is rewarded in $G_\mathcal{I}$. However, the extent to which an agent can expect major surplus in $G_\mathcal{I}$, with respect to the inefficient outcome $\bs{\xi}$ is an open problem.
The mechanism \modification{has budget deficit} and might require  external subsidies to succeed. In particular, the implementation cost of the mechanism depends on preferences and dynamics implemented by users. 

We introduce an approach based on evolutionary dynamics that might be used by each user to find the demand profile that maximizes its surplus. Particularly,  when implemented locally by each user, the evolutionary dynamics lead to the global efficient equilibrium.
We implement four popular evolutionary dynamics, namely logit dynamics, replicator dynamics, Brown-von Neumann-Nash dynamics, and Smith dynamics. 
We find that the system might converge faster to the equilibrium with Smith and BNN dynamics. Also, BNN dynamics has a relatively fast convergence and uses less resources in the long run.

Future work will be focused on analyzing the characteristics of the mechanism on large populations. Also, it is interesting to explore different dynamics that lead to minimum accumulated incentives, as well as possible applications of fast dynamics in environments with random components, such as renewable generation.
%
%
\modification{ 
\subsection{Limitations}

We do not consider time interdependencies of the electricity usage. Also, we assume that the customers can choose a continuous electricity usage. However, appliances have discrete consumption and additional operation constraints. This model does not has into account other factors that impact the satisfaction of users, such as quality (e.g., continuity of the service, variation in voltage), reliability, and security of the service. 

The  implementation might have a slow convergence because we omit the negotiation phase to preserve privacy. This might limit the use of renewable generation.

}

\section{Appendix}

\modification{
\begin{proof}[Proof of Theorem \ref{thm:efficiency}]
 The  proof of numeral (\textbf{i}) is made by contradiction. Let us assume that there exist some agent $j\in\mathcal{P}$ such that $\mu_j > \xi_j $.
 From Eq. (\ref{eq:FOC2}) we know that the Nash equilibrium $\bs{\xi}$ satisfies the following FOC:
 \begin{equation}\label{eq:foc_nash}
  \frac{\partial }{\partial q_j} U_j (\bs{q}) \Big|_{\bs{q} = \bs{\xi}} 
  = \dot{v}_j(\xi_j) - p(\norm{\bs{\xi}}_1) 
  - \xi_j \, \dot{p}(\norm{\bs{\xi}}_1) = 0.
 \end{equation}
On the other hand, our initial hypothesis implies that at the Nash equilibrium $\bs{\xi}$ the $j\th$ user has incentives to increase its consumption, that is, 
\begin{multline}
 \frac{\partial }{\partial q_j} \sum\nolimits_{i\in\mathcal{P}} U_i (\bs{q}) \Big|_{\bs{q} = \bs{\xi}} = 
 \dot{v}_j(\xi_j) - p(\norm{\bs{\xi}}_1) 
 \\ 
 - \norm{\bs{\xi}}_1 \dot{p}(\norm{\bs{\xi}}_1) > 0. \label{eq:foc_pareto}
\end{multline}
Replacing Eq. (\ref{eq:foc_nash}) into Eq. (\ref{eq:foc_pareto}) we obtain
\begin{equation}\label{eq:neg_ineq}
 - \norm{\bs{\xi}_{-j}}_1 \dot{p}(\norm{\bs{\xi}}_1) > 0.
\end{equation}
Note that with Assumption \ref{as:2} we guarantee that $\mu_i \geq 0$ and $\xi_i \geq 0$, for all  agent $i\in\mathcal{P}$. Hence, 
from Eq. (\ref{eq:neg_ineq}) we conclude that $\dot{p}(\norm{\bs{\xi}}_1) < 0$. However, from Assumption \ref{as:1} we know that the unitary price function $p(\cdot)$ is an increasing function. Hence,  $\dot{p}(\norm{\bs{\xi}}_1)$ is positive or equal to zero. This leads to a contradiction, showing that $\mu_j \leq \xi_j$, for all $j\in\mathcal{P}$.

Now, the  proof of numeral (\textbf{ii}) is made by  direct proof. 
With Assumption \ref{as:1} we guarantee that the competitive equilibrium 
is unique, and corresponds to the best possible outcome for the population. Hence,
the competitive equilibrium  is efficient in the sense of Pareto.
On the other hand, from numeral (\textbf{i}) we conclude that 
$\|\boldsymbol{\mu}\|_1 < \|\bs{\xi}\|_1 $. Therefore, $\bs{\xi}\neq \boldsymbol{\mu}$, which implies that 
the total consumption at the Nash equilibrium $\bs{\xi}$ is not efficient in the sense of Pareto.

Since users have incentives to consume more resources, and this leads to a suboptimal outcome, we conclude that the electricity system model resembles the tragedy of the commons.
\end{proof}
}

\modification{
\begin{proof}[Proof of Theorem \ref{thm:ratio}]
 From Theorem \ref{thm:efficiency} we know that $\norm{\bs{\mu}}_1 \leq \norm{\bs{\xi}}_1 $. Therefore, 
 $\sfrac{\norm{\bs{\mu}}_1}{\norm{\bs{\xi}}_1} \leq 1$. Also, 
 $\mu_i \leq \xi_i$.
 implies that $\dot{v}_i(\mu_i) \geq \dot{v}_i(\xi_i)$, since $\dot{v}_i(\cdot)$ is concave and increasing. From the  FOC in Eq.~(\ref{eq:FOC1}) and Eq.~(\ref{eq:FOC2}) we deduce that
\begin{equation} \label{eq:ineq2}
 p\left( \norm{ \bs{\xi} }_1 \right) +  \xi_i \, \dot{p}  \left( \norm{ \bs{\xi} }_1 \right)
 \leq
 p\left( \norm{ \bs{\mu} }_1 \right) +  \norm{\bs{\mu}}_1 \dot{p}\left( \norm{ \bs{\mu} }_1\right).
\end{equation}

On the other hand, we can use the convexity of the price function to show that 
\begin{equation}\label{eq:price_approx}
 p\left( \norm{ \bs{\mu} }_1 \right) +   \left( \norm{\bs{\xi}}_1 - \norm{\bs{\mu}}_1 \right) \,  \dot{p}  \left( \norm{ \bs{\mu} }_1 \right) \leq p \left( \norm{ \bs{\xi} }_1 \right).
\end{equation}
We can use Eq. (\ref{eq:ineq2}) and (\ref{eq:price_approx}) to obtain
\begin{equation}\label{eq:demand_relation_a}
 \norm{\bs{\xi}}_1 \dot{p}\left( \norm{ \bs{\mu} }_1\right) + \xi_i \, \dot{p}  \left( \norm{ \bs{\xi} }_1 \right)
  \leq
 2 \norm{\bs{\mu}}_1 \dot{p}\left( \norm{ \bs{\mu} }_1\right)
 \end{equation}
Recall that $\dot{p}(\cdot)$ is increasing, and it follows that 
$ \dot{p}  \left( \norm{ \bs{\xi} }_1 \right) \geq \dot{p}  \left( \norm{ \bs{\mu} }_1 \right)  $, which can be replaced in Eq. (\ref{eq:demand_relation_a}) to obtain 
\begin{equation}\label{eq:demand_relation_b}
 \norm{\bs{\xi}}_1 + \xi_i  \leq  2 \norm{\bs{\mu}}_1.
\end{equation}
Note that this expression is true for all $i$, because from Theorem \ref{thm:efficiency} we know that $\mu_i \leq \xi_i$. 
Hence,  we can sum Eq. (\ref{eq:demand_relation_b}) for all individuals in the population to obtain
\begin{equation}
 (N+1)  \norm{\bs{\xi}}_1  \leq  2 N \norm{\bs{\mu}}_1, 
\end{equation}
which leads to the desired result.
\end{proof}
}

\modification{
\begin{proof}[Proof of Theorem \ref{thm:budget}]
This proof is made by contradiction. First, we assume that there exist
a function $f(\cdot)$ such that the mechanism is budget balanced, i.e., 
$\sum_{i=1}^N I_i(\bs{\bs{q}})= 0.$
Now, we express the incentives in matrix form. On that purpose, 
we first define 
$
[f(\bs{\bs{q}}_{-1}),\ldots,f(\bs{\bs{q}}_{-N})]^\top = F \bs{\bs{q}},
$
as a vector with the $i\th$ element equal to $f(\bs{\bs{q}}_{-i})$. 
In particular, $F = (\bs{e} \bs{\omega}^\top - \diag(\omega_1,\ldots,\omega_N))$,
 $\bs{\omega}=[\omega_1,\ldots,\omega_N]^\top$, 
$\diag(\omega_1,\ldots,\omega_N)$ is a diagonal matrix, and 
 $\bs{e}$ is a vector in $\mathbb{R}^N$ with all its components equal to 1.

Since $p(\cdot)$ is an affine function, Eq. (\ref{eq:I_i}) can be expressed as
$
\sum_{i=1}^N I_i(\bs{\bs{q}}) =  \beta \sum_{i=1}^N \Big( \sum_{j\neq i}^N q_j \Big) 
\Big( f(\bs{\bs{q}}_{-i}) -  q_i \Big) .$
This can be rewritten in matrix form as
$
\sum_{i=1}^N I_i(\bs{\bs{q}}) =\beta \bs{\bs{q}}^\top \Phi ( F \bs{\bs{q}} - \bs{\bs{q}})  ,
$
where $\Phi = (\bs{e}\bs{e}^\top - I)$ 
and $I$ is the identity matrix in $\mathbb{R}^{N\times N}$.

Now, considering the budget balance condition, 
we have
$
\bs{\bs{q}}^\top \Phi  F \bs{\bs{q}} = \bs{\bs{q}}^\top \Phi \bs{\bs{q}}.
$
This equation is satisfied if either $q_i=0$ for all $i\in \mathcal{P}$, or  
if $F = I$. Note that $F$ is a matrix with zeros in the diagonal, therefore, $F\neq I$. Accordingly, none of the previous conditions are satisfied
 for all vector $\bs{\bs{q}}\in \mathbb{R}_{\geq 0}^{N}$.
Consequently,  we conclude that
the budget balance property cannot be achieved with the incentives mechanism described by Eq. (\ref{eq:I_i}), (\ref{eq:h}), and (\ref{eq:f}).
\end{proof}
}

\begin{proof}[Proof of Proposition \ref{prop:f}]
Let us consider an arbitrary consumption profile $\hat{\bs{q}}$ in $\mathbb{R}_{\geq 0}^N$ such that $\hat{q}_i = \hat{q}_j$, for some $i, j \in \mathcal{P}$.
Since the average cost price signal is an affine function, the incentives function in Eq. (\ref{eq:I_i}) can be rewritten as 
$
I_i(\hat{\bs{q}}) = \beta \big( \sum\nolimits_{h\neq i} \hat{q}_j \big) 
\big( f(\hat{\bs{q}}_{-i}) - \hat{q}_i \big) .
$
If we use an incentives scheme with $f(\cdot)$ defined by Eq.~(\ref{eq:f_ideal}), then the incentives assigned to the $i\th$ and $j\th$ agent are
\begin{equation}\label{eq:Ii}
I_i(\hat{\bs{q}}) = \beta \left( \sum\nolimits_{h\neq i} \hat{q}_h \right) 
\left(   \frac{1}{N-1} \sum\nolimits_{h\neq i} \hat{q}_h  - \hat{q}_i \right),
\end{equation}
\begin{equation} \label{eq:Ij}
I_j(\hat{\bs{q}}) = \beta \left( \sum\nolimits_{h\neq j} \hat{q}_h \right) 
\left(   \frac{1}{N-1} \sum\nolimits_{h\neq j} \hat{q}_h  - \hat{q}_j \right).
\end{equation}
Since $\hat{q}_i=\hat{q}_j$, then $\sum_{h\neq i} \hat{q}_h = \sum_{h\neq j} \hat{q}_h$. Hence, $I_i(\hat{\bs{q}}) = I_j(\hat{\bs{q}})$ and  condition $(i)$ is satisfied.

Now, if the consumption profile $\hat{\bs{q}}$ satisfies $\hat{q}_i = \hat{q}_j = \sigma$ for all $i,j\in\mathcal{P}$, then 
$$
I_i(\hat{\bs{q}}) = \beta \left( (N-1) \sigma \right) 
\left(   \frac{N-1}{N-1} \sigma   - \sigma \right) = 0.
$$
Consequently, condition $(ii)$ is satisfied.

Finally, let us consider an arbitrary vector $\hat{\bs{q}}$ such that $\hat{q}_i > \hat{q}_j$ for some $i,j \in \mathcal{P}$. Then we know that 
\begin{equation}\label{eq:sum1}
\sum\nolimits_{h\neq i} \hat{q}_h < \sum\nolimits_{h\neq j} \hat{q}_h.
\end{equation}
Furthermore, 
$
\frac{1}{N-1} \sum_{h\neq i} \hat{q}_h + \hat{q}_j< \frac{1}{N-1}  \sum_{h\neq j} \hat{q}_h + \hat{q}_i,
$
that can be rewritten as
\begin{equation}\label{eq:sum2}
\frac{1}{N-1} \sum\nolimits_{h\neq i} \hat{q}_h - \hat{q}_i < \frac{1}{N-1}  \sum\nolimits_{h\neq j} \hat{q}_h - \hat{q}_j.
\end{equation}
Inequalities in Eq.~(\ref{eq:sum1}) and (\ref{eq:sum2}) can be used with Eq.~(\ref{eq:Ii}) and (\ref{eq:Ij}) to show that $I_i(\hat{\bs{q}}) < I_j(\hat{\bs{q}})$. Hence, property $(iii)$ is satisfied.
\end{proof}

\begin{proof} [Proof of Proposition \ref{prop:subsidies}]
First, 
consider $q_i^2 + q_j^2 - 2q_i q_j = (q_i-q_j)^2 \geq 0$ for all $q_i\in \mathbb{R}_{\geq 0}^{T}$. Hence, we have that
$
q_i^2 + q_j^2 \geq 2q_i q_j.
$
Now, summing in both sides of the previous equation we obtain
$
\sum_{i\in\mathcal{P}} \sum_{j\neq i} (q_i^2 + q_j^2) \geq \sum_{i\in\mathcal{P}} \sum_{j\neq i} 2q_i q_j,
$
which is equivalent to
$
(N-1) \sum_{i\in\mathcal{P}} q_i^2 \geq \sum_{i\in\mathcal{P}} \sum_{j\neq i} 2q_i q_j.
$
Reordering results
\begin{equation}\label{eq:ineq}
 \sum\nolimits_{i\in\mathcal{P}} q_i^2 \geq \frac{2}{N-1} \sum\nolimits_{i\in\mathcal{P}} \sum\nolimits_{j\neq i} q_i q_j.
\end{equation}
Now, 
 let $A_{j,i} = \frac{-1}{N-1}$ if $i\neq j$ and $A_{i,i} = 1$ for all 
$i,j\in\mathcal{P}$. Therefore, the incentives in Eq. (\ref{eq:I_i_ideal_matrix}) can be expressed as $\beta \bs{q}^\top A \bs{q} = \beta \sum_{i\in\mathcal{P}} q_j \Big( \sum_{j\in\mathcal{P}} q_j A_{j,i} \Big).$
This can be rewritten as
$$
\beta \bs{q}^\top A \bs{q} =\beta \Bigg( \sum\nolimits_{i\in\mathcal{P}} q_i^2 + \frac{-1}{N-1} \sum\nolimits_{i\in\mathcal{P}} q_j \Big( \sum\nolimits_{j\neq i} q_j  \Big) \Bigg).
$$
From Eq. (\ref{eq:ineq}), it can be seen that 
$
\bs{q}^\top A \bs{q} \geq 0,
$
for all $\bs{q} \in\mathbb{R}_{\geq 0}^{N}$.
\end{proof}

\modification{

\begin{proof} [Proof of Proposition \ref{prop:improvement}]
Recall from Theorem \ref{thm:efficiency}  that the aggregate surplus at the optimal outcome $\bs{\mu}$ is greater than aggregate  surplus at the Nash equilibrium $\bs{\xi}$, that is
$\sum_{i\in\mathcal{P}} U_i( \bs{\mu} ) > \sum_{i\in\mathcal{P}} U_i( \bs{\xi} )$.
 Also, recall that the system with incentives achieves the optimum outcome, then the aggregate surplus of the system with incentives is  
 $\sum_{i\in\mathcal{P}} W_i (\bs{\mu}) = \sum_{i\in\mathcal{P}} \left( U_i( \bs{\mu} ) + I_i(\bs{\mu}) \right)$.

 If we assume that the incentives are founded by the ISO, then the aggregate surplus with the mechanism is 
 %
  $$\sum\nolimits_{i\in\mathcal{P}} W_i (\bs{\mu})  - I_i(\bs{\mu})= \sum\nolimits_{i\in\mathcal{P}} U_i( \bs{\mu} ) 
  > \sum_{i\in\mathcal{P}} U_i( \bs{\xi} ).$$
 Consequently, the aggregate surplus with the mechanism is equal to the aggregate surplus of the optimal equilibrium. 
\end{proof}

}

\begin{proof}[Proof of Theorem \ref{thm:rationality}]
 Let us consider a consumption profile in which the $i\th$ individual is not consuming energy, i.e., $\bs{q}$ such that $q_i = 0$ for some $i\in\mathcal{P}$. Note that $\norm{\bs{q}}_1 = 0 + \norm{\bs{q}_{-i}}_1$. Hence, from Eq.~(\ref{eq:u_incentives}) we have that the surplus with incentives is equal to
 \begin{multline}
  W_i (q_i = 0, \bs{q}_{-i}) = v_i(0) +
  \\
  \norm{\bs{q}_{-i} }_1  \left(p\left(\frac{N}{N-1} \norm{\bs{q}_{-i} }_1 \right) - p\left(\norm{\bs{q}_{-i} }_1 \right)  \right)
 \end{multline}
Since the price function is increasing, we know that $p\left(\frac{N}{N-1} \norm{\bs{q}_{-i} }_1 \right) \geq p\left(\norm{\bs{q}_{-i} }_1 \right) $. Consequently, the surplus function of every agent $i$ is greater or equal than zero. 
 \end{proof}

\begin{proof}[Proof of Theorem \ref{thm:benefit_agents}]
 Let us rewrite the surplus of the $i\th$ agent (see Eq.~(\ref{eq:u_i_})) at the equilibrium as
 $$
 U_i(\bs{\mu}) = v_i(\mu_i) - \norm{\mu}_1 p(\norm{\mu}_1) + \norm{\mu_{-i}}_1 p(\norm{\mu}_1).
 $$
 If we evaluate the surplus with incentives (see Eq.~(\ref{eq:u_incentives})) in the Pareto optimal outcome $\bs{\mu}$, we can find that 
 \begin{multline}
 W_i(\bs{\mu}) - U_i(\bs{\mu}) = 
 \\
 \norm{\bs{\mu}_{-i}}_1 \left( p\left( \frac{1}{N-1} \norm{\bs{\mu}_{-i}}_1 + \norm{\bs{\mu}_{-i}}_1 \right) 
  - p( \norm{\bs{\mu}}_1 )   \right).
 \end{multline}


 Now, if  $\mu_i < \frac{1}{N-1} \sum_{h\neq i} \mu_h$ ( that can be rewritten as  $\mu_i < \frac{1}{N} \sum_{h\in\mathcal{P}} \mu_h$), then 
 $\frac{1}{N-1} \norm{\bs{\mu}_{-i}}_1 + \norm{\bs{\mu}_{-i}}_1 > \norm{\bs{\mu}}_1$ and consequently 
 $W_i(\bs{\mu}) - U_i(\bs{\mu}) > 0$.
 
 On the other hand, if  $\mu_i \geq \frac{1}{N-1} \sum_{h\neq i} \mu_h$ (that can be rewritten as $\mu_i \geq \frac{1}{N} \sum_{h\in\mathcal{P}} \mu_h$), then $\frac{1}{N-1} \norm{\bs{\mu}_{-i}}_1 + \norm{\bs{\mu}_{-i}}_1 \leq \norm{\bs{\mu}}_1$ and thus 
 $W_i(\bs{\mu}) - U_i(\bs{\mu}) \leq 0$.

\end{proof}

\bibliographystyle{IEEEtran} 
\bibliography{references}

\begin{IEEEbiographynophoto}
{Carlos Barreto}
received the B.S. degree in electronic engineering from Universidad Distrital Francisco Jos\'e de Caldas, Bogot\'a, Colombia in 2011. In 2013 he received his M.S. degree in electronic engineering from Universidad de los Andes, Bogot\'a, Colombia. He is currently working toward the Ph.D. degree in the department of Computer Science, University of Texas at Dallas. 

From 2013 to 2014, he was a Young  Researcher  with  the  Department  of  Electrical  and  Electronics  Engineering, Universidad  de  los  Andes.
Since 2014 he has been a 
Research Assistant at University of Texas at Dallas, Richardson, TX, USA. His research interests include cyber physical systems security, distributed resource allocation, and game theoretic methods with applications to smart grids.
\end{IEEEbiographynophoto}
\begin{IEEEbiographynophoto}
{Eduardo Mojica-Nava}
(S'00–M'10)  received  the B.S.   degree   in   electronics   engineering  from   the Universidad  Industrial  de  Santander,  Bucaramanga, Colombia, in 2002, the M.Sc. degree in electronics engineering  from  the  Universidad  de  Los  Andes, Bogota, Colombia, and the Ph.D. degree in automatique et informatique industrielle from the E\'cole des Mines de  Nantes,  Nantes,  France  in  co-tutelle  with Universidad de Los Andes, in 2010. 

He  was  an  Assistant  Professor  with  Universidad Catolica   de   Colombia,   Bogota,   Colombia.   From 2011  to  2012,  he  was  a  Post-Doctoral  Researcher  with  the  Departmentof  Electrical  and  Electronics  Engineering,  Universidad  de  los  Andes.  He is  currently  an  Associate  Professor  with  the  Department  of  Electrical  and Electronics Engineering, National University of Colombia, Bogota, Colombia. His  current  research  interests  include  optimization  and  control  of  complexnetworked systems, switched and hybrid systems, and control in smart grids applications.
\end{IEEEbiographynophoto}
\begin{IEEEbiographynophoto}
{Nicanor Quijano}
(S’02-M’07-SM’13) received his B.S. degree in Electronics Engineering from Pontificia Universidad Javeriana (PUJ), Bogotá, Colombia, in 1999. He received the M.S. and PhD degrees in Electrical and Computer Engineering from The Ohio State University, in 2002 and 2006, respectively. In 2007 he joined the Electrical and Electronics Engineering Department, Universidad de los Andes (UAndes), Bogotá, Colombia as an Assistant Professor. In 2008 he obtained the Distinguished Lecturer Award from the School of Engineering, UAndes. He is currently an Associate Professor, the director of the research group in control and automation systems (GIAP, UAndes), and a member of the Board of Governors of the IEEE CSS for the 2014 period. He was the chair of the IEEE Control Systems Society (CSS), Colombia for the 2011-2013 period.  His research interests include hierarchical and distributed optimization methods, using bio-inspired and game-theoretical techniques for dynamic resource allocation, applied to problems in energy, water, and transportation. For more information and a complete list of publications see: \url{http://wwwprof.uniandes.edu.co/~nquijano}.
\end{IEEEbiographynophoto}
\end{document}